\documentclass[11pt, a4paper, copyright, goog, gr]{google}
\usepackage[authoryear, sort&compress, round]{natbib}
\usepackage{microtype}
\usepackage{graphicx}
\usepackage{subcaption}
\usepackage{booktabs} %
\usepackage{xcolor}

\usepackage{hyperref}
\usepackage{graphicx}
\usepackage{subcaption}
\usepackage{enumitem}
\usepackage{marvosym}

\usepackage{amsmath}
\usepackage{amssymb}
\usepackage{mathtools}
\usepackage{amsthm}

\usepackage[capitalize,noabbrev]{cleveref}

\newtheorem{proposition}{Proposition}

\theoremstyle{plain}
\newtheorem{theorem}{Theorem}[section]

\theoremstyle{definition}
\newtheorem{definition}[theorem]{Definition}

\theoremstyle{remark}

\colorlet{iclrblue}{blue!70!black}   %
\colorlet{iclrred}{BrickRed}         %

\usepackage[textsize=tiny]{todonotes}
\usepackage{multirow}
\usepackage{arydshln} %

\keywords{Generative Recommendation, Semantic IDs, Recommender Systems, Computational Advertising, Mechanism Design, Auctions, Deep Learning}

\uselogo{} 

\title{One Model, Two Markets: Bid-Aware Generative Recommendation}
\usepackage{amsfonts,amsmath,amssymb,amsthm}

\correspondingauthor{\Letter  \ Correspondence to: Yanchen Jiang <yanchen\_jiang@g.harvard.edu>; Zhe Feng <zhef@google.com>.}

\renewcommand{\today}{2026-03-01}

\author[2,3, \Letter]{Yanchen Jiang}
\author[1, \Letter]{Zhe Feng}
\author[1]{Christopher P. Mah}
\author[1]{Aranyak Mehta}
\author[1]{Di Wang}

\affil[1]{\thepa{}{}}
\affil[2]{Harvard University}
\affil[3]{Work done during an internship at Google Research}

\begin{abstract}
Generative Recommender Systems using semantic ids, such as TIGER \citep{tiger}, have emerged as a widely adopted competitive paradigm in sequential recommendation. However, existing architectures are designed solely for semantic retrieval and do not address concerns such as monetization via ad revenue and incorporation of bids for commercial retrieval. We propose \textbf{GEM-Rec}, a unified framework that integrates commercial relevance and monetization objectives directly into the generative sequence. We introduce control tokens to decouple the decision of whether to show an ad from which item to show. This allows the model to learn valid placement patterns directly from interaction logs, which inherently reflect past successful ad placements. Complementing this, we devise a Bid-Aware Decoding mechanism that handles real-time pricing, injecting bids directly into the inference process to steer the generation toward high-value items. We prove that this approach guarantees allocation monotonicity, ensuring that higher bids weakly increase an ad's likelihood of being shown without requiring model retraining. Experiments demonstrate that \textbf{GEM-Rec} allows platforms to dynamically optimize for semantic relevance and platform revenue.
\end{abstract}

\begin{document}
\maketitle

\section{Introduction}

Recommender Systems are undergoing a paradigm shift from discriminative ranking to Generative Information Retrieval. By representing items as hierarchical Semantic IDs, models like TIGER \citep{tiger} treat recommendation as a sequence generation task. This approach has achieved state-of-the-art results by learning to predict the deep semantic structure of user trajectories \citep{singh2024better, grid, liger}.

However, applying generative models to industrial platforms presents a challenge: Monetization. Real-world systems must dynamically co-optimize for organic semantic relevance and sponsored commercial relevance, driving utility for both the user and the platform. Standard generative models fail to address this because they optimize purely for semantic likelihood. They treat every item as an organic prediction target, ignoring the economic constraints and auction dynamics that govern sponsored content.

A major difficulty is that organic items and sponsored items operate under different objectives. Organic items are selected based on estimated user preference over a corpus without price signals.  Sponsored items are selected based on a mixed objective: they must be generally relevant, but also particularly relevant to user commercial intent, incorporating bids and potential ads auction revenue. Simply training a model on a mixed stream of data confuses these signals. Furthermore, auction bids fluctuate in real-time. These dynamics are not merely economic constraints but informative signals: advertisers often adjust bids dynamically to reflect their confidence in an item's inventory, quality and relevance to the user. A model trained solely on historical logs ignores this live feedback, locking in past valuations and preventing the system from adapting to new price opportunities without retraining.

We propose \textbf{GEM-Rec}, a framework that integrates monetization into generative recommendation. We augment the semantic vocabulary with explicit control tokens. This distinguishes the decision of whether to show an ad from which item to show. By training on successful interaction logs, the model learns a ``feasibility'' policy—identifying contexts where ads were historically both semantically relevant and economically viable.

However, learning from history is distinct from maximizing current utility. Historical logs reflect a baseline of user acceptance, but they do not encode the explicit economic value of current opportunities. To bridge this gap, we introduce a Bid-Aware Decoding mechanism that injects auction bids directly into the inference process. This allows the platform to dynamically steer the generation toward high-value items in real time and fine-tune the trade-off between revenue and relevance without retraining the model.

Our contributions are as follows:
\begin{enumerate}
    \item \textbf{Unified Generation Architecture:} We propose a novel sequence formulation that integrates explicit control tokens into the Semantic ID vocabulary. This factorizes the decision of slot allocation (predicting the display mode) from content retrieval (generating the item ID), enabling a single unified model to serve both organic and sponsored requests.
    
     \item \textbf{End-to-End Learning of Marketplace Policy:} We show that the generative model implicitly learns the latent context dependencies that govern valid ad exposures. By training on successful interaction logs, the model learns to treat the decision to display an ad as an integral part of the user's trajectory.
    
    \item \textbf{Bid-Aware Constrained Decoding:} We derive an inference-time mechanism to inject bids directly into the decoding search space. We provide theoretical guarantees that this approach satisfies Allocative Monotonicity (higher bids weakly increase exposure) and Organic Integrity (ranking order for an organic recommendation remains invariant), providing practitioners with precise control over the system's operating point.
\end{enumerate}

An overview of the architecture is provided in Figure \ref{fig:gemrec}.

\begin{figure*}[t]
  \centering
  \includegraphics[width=0.9\textwidth]{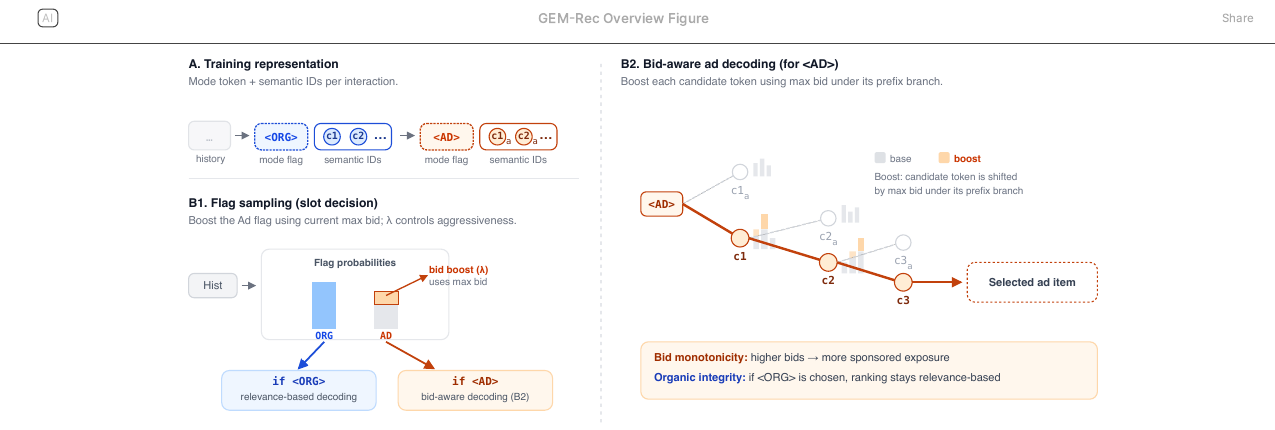}
  \caption{\textbf{GEM-Rec Unified Architecture.} Left: training on unified organic and ad logs and sampling the slot type. Right: bid-aware decoding for ads; organic decoding stays relevance-based.}
  \label{fig:gemrec}
\end{figure*}

\subsection{Related Works}

As noted earlier, the field is evolving beyond discriminative ranking over atomic IDs \citep{sasrec, bert4rec}. Early generative approaches \citep{p5} recast recommendation as conditional generation. TIGER \citep{tiger} advanced this by introducing \emph{Semantic IDs} to enable autoregressive decoding directly in a structured identifier space. This Semantic-ID paradigm has since been further explored and extended in subsequent work, including \citet{liger, grid, singh2024better, penha2025semantic, hou2025generating}. While more recent systems such as OneRec \citep{onerec} extend with scaling mechanisms and preference alignment (e.g., sparse MoE scaling and DPO-style alignment), the objectives remain fundamentally preference-centric. They lack the architectural mechanisms to process economic constraints and bid information at inference time.

In industrial feed systems, organic recommendation and sponsored delivery are often produced by separate stacks (e.g., a recommender ranking model and an ad ranking model) and then combined at serving time by a merging/blending layer \citep{chen2022hierarchically,yan2020ads,liao2022cross}. A line of research instead performs \emph{joint} optimization of recommendation and ad insertion (often via reinforcement learning), e.g., \citet{zhao2020jointly, zhao2021dear}. These RL-based insertion methods optimize whether/which/where to insert ads into a list, but they are not generative retrieval models and they do not provide a semantic-ID decoder whose outputs can be directly modulated by live bids.

A few recent papers bring generative modeling into advertising, but they study different tasks than ours. RARE \citep{rare} and EGRM \citep{grm} focus on sponsored search given a query: they generate an intermediate retrieval key for ads, whereas our setting is sequential recommendation where the model must account for how earlier organic/ad exposures affect what should be shown next. GPR \citep{gpr} studies end-to-end advertising recommendation: it represents user journeys with tokens that include both organic content and ad, but the generation target is ads (and CTR) rather than a mixed organic/sponsored recommendation list. In contrast, we generate one sequence that can include both organic and sponsored items, and we cleanly separate (i) learning when sponsored exposure is appropriate from historical logs and (ii) using the \emph{current bids} at decoding time to decide which sponsored item to show, so the system can react to real-time bid changes.

Additional related works is provided in Appendix~\ref{app:more_related}.

\section{Problem Setup}
\label{sec:problem}

We consider sequential recommendation in a marketplace containing both organic content and ads. We represent user history as a sequence of tuples $(m, i)$, where $i$ is the item and $m \in \{\text{Organic}, \text{Sponsored}\}$ is the display mode. This distinction is necessary as the same item may appear as organic in one context and as a sponsored ad in another.

We do not have access to the oracle functions governing user satisfaction or advertiser value. Instead, we learn from interaction logs, which reflect only realized successful placements. An item appears in this record only if it satisfied two latent constraints: a \emph{Platform Filter} (having a high enough bid to secure the display slot) and a \emph{User Filter} (being relevant enough to prompt a click or interaction).

Our goal is to learn a generative policy that mimics this joint acceptance distribution. By training on these logs, the model learns a ``feasibility'' policy—predicting items that historically proved to be both monetizable and relevant.

\section{The GEM-Rec Framework}
\label{sec:framework}

To solve this problem, we require a generative architecture capable of jointly modeling the decision of whether or not to present an ad, and the semantic content of an item itself. We build upon the TIGER framework \citep{tiger} but introduce a novel control structure to explicitly support this dual mandate within a single autoregressive sequence.

\subsection{Preliminaries: Semantic IDs}
Following the TIGER paradigm \citep{tiger, grid}, we represent items using Semantic IDs. This approach moves beyond atomic integers by mapping items to hierarchical tuples of discrete codes $s_i = (c_1, \dots, c_D)$ derived from a Residual Quantized VAE (RQ-VAE).

Crucially, this quantization is coarse-to-fine: the first code $c_1$ captures broad semantic categories, while subsequent codes capture increasingly fine-grained details. Consequently, items with similar content embeddings map to IDs that share common prefixes. This structure allows the model to generalize relevance patterns across the inventory and enables the prefix-based constraints we utilize in decoding.

\subsection{Unified Sequence Construction}
Recall from Section \ref{sec:problem} that the user history consists of context-aware tuples $(m, i)$. To ingest this structured history into a Transformer, we flatten it into a linear stream by treating the display mode as a prefix modifier.

We augment the vocabulary with two special control tokens: $\mathcal{F} = \{ \texttt{<ORG>}, \texttt{<AD>} \}$.
For an item at step $t$ with semantic codes $(c_{t,1}, \dots, c_{t,D})$, the generative sequence segment $\mathbf{x}_t$ is constructed as:
\begin{equation}
    \mathbf{x}_t = [f_t] \oplus [c_{t,1}, c_{t,2}, \dots, c_{t,D}]
\end{equation}
where $f_t = \texttt{<AD>}$ if $m_t = \text{Sponsored}$, and $\texttt{<ORG>}$ otherwise. The full autoregressive sequence is the concatenation of these segments: $\mathbf{x} = \mathbf{x}_1 \oplus \mathbf{x}_2 \oplus \dots \oplus \mathbf{x}_T$.

\textbf{Unified Representation and Strategic Flexibility} 
It is important to note that the Semantic ID generation process remains unchanged. We utilize standard item codes $s_i$ derived from a pre-trained RQ-VAE, ensuring the model's understanding of item content is consistent with standard generative baselines.

Our core innovation lies in the introduction of the control token $f_t$, which acts as a learnable mode switch for the generative process. Since the interaction is represented as $[f_t] \oplus s_i$, the Transformer's attention mechanism processes the item content differently depending on the prefix:

\begin{enumerate}
    \item Under \texttt{<ORG>}: The model operates in ``Preference Mode," retrieving items that maximize semantic match to the user's organic history.
    \item Under \texttt{<AD>}: The model shifts to ``Monetization Mode." It learns to target the subset of inventory that historically succeeded as sponsored impressions. By training on successful historic clicks, the model implicitly captures the intersection of high semantic relevance and high economic value, favoring items that are likely to both win an auction and satisfy the user.
\end{enumerate}

This design structurally factorizes the recommendation task. By generating the control token first, the model explicitly decides the intent of the slot (Organic vs. Sponsored) before committing to the specific content (Item ID). This modularity is crucial for the inference-time flexibility discussed in Section \ref{sec:mechanism}.

\subsection{The Factorized Generative Objective}
To operationalize this structure, we train a Transformer to minimize the negative log-likelihood of the sequence $\mathbf{x}$. Letting $H_{<t}$ denote the sequence history (context) prior to step $t$, the probability of generating the next interaction at step $t$ factorizes as follows:

\begin{equation}
    P_\theta(\mathbf{x}_t | H_{<t}) = \underbrace{P_\theta(f_t | H_{<t})}_{\text{Ad Satisfaction Modeling}} \cdot \underbrace{\prod_{k=1}^D P_\theta(c_{t,k} | H_{<t}, f_t, c_{t,<k})}_{\text{Mode-Conditional Retrieval}}
\end{equation}

This factorization effectively disentangles the platform's two distinct responsibilities:

\begin{enumerate}
    \item \textbf{Ad Satisfaction Modeling} ($P(f)$): The first term learns the contextual boundaries of monetization. By maximizing likelihood over realized trajectories, the model internalizes the latent constraints governing ad exposure (e.g., fatigue or narrative disruption). It aligns the probability of the $\texttt{<AD>}$ flag with the historical density of successful ad interactions, implicitly learning to suppress sponsored slots in contexts where they would degrade user utility.
    
    \item \textbf{Mode-Conditional Retrieval} ($P(c|f)$): The second term learns semantic relevance. Conditioned on the chosen slot type, the model retrieves the item best suited for that specific mode. If $f_t=\texttt{<ORG>}$, the model optimizes for pure organic preference. If $f_t=\texttt{<AD>}$, the model retrieves items that maximize likelihood within the distribution of historically clicked sponsored items. This effectively learns to rank items that possess both high semantic relevance and the commercial viability implicit in the training logs.
\end{enumerate}

\section{Inference-Time Bid Modulation}
\label{sec:mechanism}

The framework established in Section \ref{sec:framework} allows the model to learn a ``Safe Baseline" from historical logs. However, this approach does not fully incorporate dynamic commercial relevance and economic utility. Because the training distribution reflects only a historical baseline of ad engagement, the model remains blind to real-time market opportunities, treating low-value and high-value ads identically if their semantic relevance is similar.

To incorporate these dynamics, we introduce GEM-Decoding, a mechanism that injects active bids ($b$) into the inference step. This allows the platform to dynamically "boost" the probability of ads based on their real-time economic value, effectively steering the generative process without retraining.

\subsection{Two-Level Logit Modulation}
Consistent with our factorized architecture, we inject bid information at two levels: the Slot Decision (\textit{whether} to show an ad) and the Item Decision (\textit{which} ad to show). The strength of this modulation is controlled by a scalar $\lambda \ge 0$.

\textbf{1. Slot-Level Modulation (Dynamic Ad Load)}
First, we modulate the decision to open a sponsored slot. We wish to encourage the model to open a sponsored slot if the available inventory is valuable. Let $z_{\texttt{<AD>}}$ be the raw score (logit) for the sponsored flag. We boost this score using the \textbf{maximum bid} among valid sponsored candidates ($b_{max}$):
\begin{equation}
    \tilde{z}_{\texttt{<AD>}} = z_{\texttt{<AD>}} + \lambda \cdot \log(1 + b_{max})
\end{equation}
If high-value inventory is available ($b_{max}$ is high), this term encourages the selection of $\texttt{<AD>}$ flag more frequently. This allows Ad Rate to scale dynamically with market demand.

\textbf{2. Item-Level Modulation (Maximizing Revenue)}
Conditioned on the sampled flag being \texttt{<AD>}, the model must retrieve the most valuable item. To guide the beam search effectively, we employ Prefix-Aware Bid Aggregation.

Since Semantic IDs are hierarchical, an intermediate token $c_k$ represents a cluster of items sharing that prefix. We pre-compute a lookup table $\mathcal{B}(c_k | c_{<k})$ that stores the maximum bid of any valid item within that semantic cluster. Conditioned on the flag $f_t=\texttt{<AD>}$, we then modulate the token logits:
\begin{equation}
\tilde{z}_{c} = z_{c} + \lambda \cdot \log(1 + \mathcal{B}(c))
\end{equation}
This ensures Bid-Aware Logit Shaping: among semantically plausible tokens, the decoder is biased toward branches containing high-bidding items. This steers the generation path toward valuable outcomes early in the sequence, effectively pruning low-value branches before they are fully generated.

\subsection{Pricing Mechanism}
\label{sec:pricing}

For our experimental evaluation, we adopt the First-Price payment rule, where the winning advertiser pays their bid. This aligns with the recent structural shift in the digital advertising ecosystem, where major platforms have transitioned to first-price auctions to reduce complexity and improve transparency \citep{bigler2019updateFirstPriceAdManager, googleAdSense2021movingFirstPriceAuction, openx2017transparentFirstPriceBlog, pubmatic2018firstPriceAuctionsAuctionDynamics, microsoftxandr2025auctionOverview, amazonaps2023apsUpdateFirstPrice, magnite2021ctvMovingToFirstPrice, sivan2020competitive}. Accordingly, we report ``Revenue" in our results as the cumulative sum of the winning bids for all generated ad slots.

While First-Price auctions are increasingly common in practice and allow the system to operate directly on observable bids, ideally, we would also want to guarantee dominant-strategy incentive compatibility (DSIC) to ensure these bids truthfully reflect underlying advertiser values.
In our setting, moving from first-price to truthful (e.g., critical-bid/VCG-style) payments is conceptually natural but technically nontrivial because it requires a carefully coupled interaction between decoding randomness, monotonicity of the induced allocation rule, and counterfactual evaluation. Appendix~\ref{app:stochastic_ic} discusses these tradeoffs and outlines possible directions that do not require architectural changes; we leave a full DSIC implementation to future work.

\subsection{Theoretical Properties}

We highlight two key properties of this mechanism. The proof is given in the Appendix \ref{app:theory}.

\begin{definition}[GEM-Allocation Rule]
The allocation rule $x_i(b)$ denotes the total probability that an ad item $i$ is generated given context $H$ and bid profile $b$. Reflecting the hierarchical decoding process, this factorizes into:
\begin{equation}
    x_i(b) = \underbrace{P_\lambda(f=\texttt{<AD>} \mid H, b)}_{\text{Stochastic Slot}} \cdot \underbrace{\mathbb{I}[\mathcal{D}_K(H, b) = i]}_{\text{Deterministic Item}}
\end{equation}
where $P_\lambda$ is the modulated probability of sampling the ad flag, and $\mathcal{D}_K$ is the deterministic output of beam search with width $K$ conditioned on $f=\texttt{<AD>}$.
\end{definition}

\begin{proposition}[Allocative Monotonicity]
\label{thm:bidmonotonicity}
The GEM-Allocation rule is monotone. That is, for any context $H$ and fixed opposing bids $b_{-i}$, the exposure probability $x_i(b_i, b_{-i})$ for a sponsored item $i$ is non-decreasing in $b_i$.
\end{proposition}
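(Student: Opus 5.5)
The plan is to use the product structure of the GEM-Allocation Rule. Both factors of $x_i(b)=P_\lambda(f=\texttt{<AD>}\mid H,b)\cdot\mathbb{I}[\mathcal{D}_K(H,b)=i]$ are nonnegative. So it suffices to show that each factor is separately non-decreasing in $b_i$ when $b_{-i}$ and $H$ are held fixed. Throughout, I rely on one structural fact about the architecture: the raw logits $z$ are produced by the Transformer from $H$ and the partial prefix only. Bids therefore enter solely through the additive terms $\lambda\log(1+\cdot)$. Also, $\lambda\ge 0$ and $t\mapsto\log(1+t)$ is non-decreasing on $t\ge 0$.

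\textbf{Slot factor.} Let $b_i'>b_i$.
\begin{enumerate}
\item Since $b_{max}=\max(b_i,\max_{j\neq i}b_j)$, we have $b_{max}$ weakly non-decreasing in $b_i$. Hence $\tilde z_{\texttt{<AD>}}$ is weakly non-decreasing in $b_i$.
\item The competing logit $z_{\texttt{<ORG>}}$ is unmodulated.
\item $P_\lambda(\texttt{<AD>})=\sigma(\tilde z_{\texttt{<AD>}}-z_{\texttt{<ORG>}})$ is a softmax over the two control tokens, and it is increasing in its own logit.
\end{enumerate}
This gives monotonicity of the first factor. The argument extends verbatim if the softmax ranges over a larger vocabulary with all other logits fixed.

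\textbf{Item factor (main step).} Since the indicator takes values in $\{0,1\}$, I must show: if $\mathcal{D}_K(H,b)=i$, then $\mathcal{D}_K(H,(b_i',b_{-i}))=i$.

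First I characterize how the modulation changes. Because $\mathcal{B}(c_k\mid c_{<k})$ is a maximum of bids over the leaves of a subtree, raising $b_i$ weakly increases $\mathcal{B}$ exactly on the ancestors of $i$, namely the prefixes $s_i^{(\le k)}$, and leaves it unchanged elsewhere. Let $\delta_k\ge 0$ be the increase at the level-$k$ ancestor of $i$. The cumulative beam score of any level-$k$ partial sequence $p$ then increases by $\Delta(p)=\sum_{j\le \ell(p)}\delta_j$, where $\ell(p)$ is the length of the common prefix of $p$ and $s_i$. In particular, $\Delta(s_i^{(\le k)})=\sum_{j\le k}\delta_j\ge\Delta(p)$ for every other $p$ at level $k$. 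So $i$'s prefix gains at least as much as any competitor at every level.

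The argument then proceeds by induction on the level $k$. Suppose $i$'s level-$k$ prefix is in the top-$K$ beam under $b$. Under $b'$, its score-gap against every candidate at that level weakly increases. Hence the set of candidates beating it can only shrink, and it stays in the beam. The base case holds because the level-$0$ beam contains only the context, and the candidate sets at each level are generated from beams that still contain $i$'s ancestor. Note that the candidate set may change for non-ancestral branches, but the rank of $i$'s prefix can only improve. At the final level $D$, the same gap argument shows that $i$ remains the top-scoring complete sequence.

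\textbf{Expected obstacle.} The main obstacle is that the beam composition itself changes with $b_i$. Other survivors at level $k$ may differ, and this changes which level-$(k+1)$ candidates exist. The induction must therefore compare $i$'s prefix against an arbitrary, possibly new, set of competitors rather than a fixed one. The uniform bound $\Delta(s_i^{(\le k)})\ge\Delta(p)$ for all $p$ handles this, because any competitor under $b'$ also existed as a scorable node under $b$ and, if it was beaten by $i$'s prefix there, remains beaten.

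I will state explicitly that ties are broken by a fixed deterministic rule independent of bids. Without this assumption, an equal-gain tie could flip against $i$, and the indicator would fail to be monotone.
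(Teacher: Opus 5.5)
\textbf{There is a genuine gap in your item-factor argument, and it is fatal: the claim you are proving is false for beam width $K\ge2$ and depth $D\ge3$.}

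Your slot-factor argument is fine. Your item-factor argument takes the paper's route: the score gap between $i$'s hypothesis and any \emph{fixed} competitor is non-decreasing in $b_i$. The problem is the obstacle you flag yourself, which your resolution does not handle.

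You write that ``any competitor under $b'$ also existed as a scorable node under $b$ and, if it was beaten by $i$'s prefix there, remains beaten.'' This omits competitors that \emph{beat} $i$'s prefix under $b$ but were absent from the candidate set only because their parent had been pruned. Such nodes can enter under $b'$. By your own formula, the shift of a non-ancestral node depends on $\ell(p)$. So from level $2$ on, raising $b_i$ lifts siblings of $i$'s ancestors relative to unrelated branches and can swap them into the beam. Their children then become new candidates, and their gap to $i$'s prefix may already have been negative under $b$.

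The paper's proof has the same hole. It infers survival from $i$'s rank ``among all partial hypotheses at depth $t$'', which is not the set beam search ranks over.

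\textbf{Counterexample.} Take $K=2$, $D=3$, codebook size $256$. Let $b_i$ be the largest bid under the level-1 code $A$. Then raising it adds the same $\delta$ (say $e^{\delta}=3.5$) to the logits of $A$, of $X$ after $A$, and of $i$ after $AX$.

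Let the modulated probabilities under $b$ be $P(A)=P(B)=0.5$, $P(X\mid A)=0.5$, $P(Y\mid A)=0.3$, $P(U\mid B)=0.4$, $P(i\mid AX)=0.03$, $P(p\mid AY)=1$. Spread every unspecified mass thinly over the other codes, including the children of $BU$.

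Under $b$:
\begin{itemize}
\item The level-2 beam is $\{AX,BU\}$, with probabilities $0.25$ and $0.2$.
\item $AY$, at $0.15$, is pruned.
\item The output is $AXi$, at $0.0075$.
\end{itemize}

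Under $b'$:
\begin{itemize}
\item $AX\approx0.605$ and $AY\approx0.104>BU\approx0.089$, so the beam becomes $\{AX,AY\}$.
\item At level 3, $AYp\approx0.104$ beats $AXi\approx0.059$.
\end{itemize}
Thus $x_i$ drops from $P_\lambda(\texttt{<AD>}\mid H,b)>0$ to $0$. If instead $P(i\mid AX)=0.01$ and $AY$ has two children at $0.5$, then $i$ even leaves the final beam.

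\textbf{What your argument does establish.} It is valid for greedy decoding ($K=1$) and for exact decoding over all leaves. It is also valid for $D\le2$: at level $1$ all non-ancestral nodes shift equally, so the level-2 candidate set is unchanged. For $K\ge2$, $D\ge3$ (the paper uses $D=3$), you need an extra assumption or a different decoder.

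\textbf{Minor point.} The formula $\Delta(p)=\sum_{j\le\ell(p)}\delta_j$ is exact only for raw logit sums. With log-softmax scores, the ancestor gains less than $\delta_k$, and the competitor loses an extra $\Delta\log Z$ at its divergence step. The inequality $\Delta(s_i^{(\le k)})\ge\Delta(p)$ still holds.
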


This confirms \emph{Allocative Monotonicity}: increasing a bid cannot reduce the likelihood that the corresponding sponsored item is shown, so the system is rationally responsive to economic signals.

\begin{proposition}[Structural Consistency]
\label{prop:consistency}
The framework satisfies three design guarantees:
\begin{enumerate}
    \item \textbf{Safe Fallback:} When $\lambda=0$, all modulated logits equal the base model logits. Therefore the system reduces to the underlying generative recommender trained on historical logs, and the ad rate is governed solely by the learned slot model $P_\theta(\texttt{<AD>}|H)$ (i.e., the model’s learned historical baseline).
    
    \item \textbf{Organic Integrity:} The modulation of logits is strictly gated by the sponsored flag ($f=\texttt{<AD>}$). Consequently, for any fixed context $H$, the relative ranking of any two organic items $i, j$ is invariant with respect to $\lambda$.
    
    This guarantees that while $\lambda$ may alter the \textit{frequency} of organic slots, it never distorts the model's assessment of \textit{which} organic content is most relevant.
    
    \item \textbf{Generalization:} In the limit where the training corpus contains only organic interactions, $P(f=\texttt{<AD>}) \to 0$. The system collapses to standard TIGER baseline.
\end{enumerate}
\end{proposition}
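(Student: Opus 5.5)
The statement to prove is Proposition~\ref{prop:consistency}. Each of its three items follows by inspecting the decoding equations of Section~\ref{sec:mechanism} together with the factorization of $P_\theta(\mathbf{x}_t\mid H_{<t})$. I will treat them in order, as three short lemmas.

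\textbf{Safe Fallback.} With $\lambda=0$, both modulation equations give $\tilde z_{\texttt{<AD>}} = z_{\texttt{<AD>}}$ and $\tilde z_c = z_c$ for every semantic token. Since $\log(1+b)$ is finite for every bid $b\ge 0$, this holds regardless of $b_{max}$ or of the entries of $\mathcal{B}$. The softmax is a function of the logits alone, so $P_\lambda(f\mid H,b)=P_\theta(f\mid H)$ and every conditional $P_\theta(c_{t,k}\mid H,f,c_{t,<k})$ is unchanged. Beam search is a deterministic function of these conditionals, so $\mathcal{D}_K$ is also unchanged. The ad rate is therefore $P_\theta(\texttt{<AD>}\mid H)$, exactly as claimed.

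\textbf{Organic Integrity.} This is the only step that needs care about what ``ranking'' means. I would define the score of an organic item $i$ with codes $(c_1,\dots,c_D)$ as the conditional sequence score
\[
S_{\mathrm{org}}(i) = \sum_{k=1}^D \log P(c_k\mid H,\texttt{<ORG>},c_{<k}).
\]
Beam search over the \texttt{<ORG>} branch depends only on these quantities. By construction, the item-level modulation is applied only when $f_t=\texttt{<AD>}$, so every logit under the \texttt{<ORG>} prefix equals its base value for all $\lambda$. Hence $S_{\mathrm{org}}(i)$ is independent of $\lambda$, and the order $S_{\mathrm{org}}(i)\gtrless S_{\mathrm{org}}(j)$ is invariant, as is the beam output.

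The joint probability $P_\lambda(\texttt{<ORG>}\mid H,b)\,e^{S_{\mathrm{org}}(i)}$ does depend on $\lambda$. However, the first factor is common to every organic item, so it rescales all of them equally and preserves their order. It affects only the frequency of organic slots, which is the remark made in the statement. The main potential objection is that slot modulation also changes the \texttt{<ORG>} probability through softmax normalization; the common-factor argument is what disposes of it.

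\textbf{Generalization.} Suppose the training corpus contains only \texttt{<ORG>} tokens. The maximum-likelihood objective for the slot term $-\sum_t \log P_\theta(f_t\mid H_{<t})$ is then minimized by driving $P_\theta(\texttt{<AD>}\mid H)\to 0$ on the training contexts, i.e., $z_{\texttt{<AD>}}-z_{\texttt{<ORG>}}\to-\infty$. The statement is asymptotic, and I would phrase it at the population optimum to avoid generalization issues. In that limit the \texttt{<ORG>} token is emitted deterministically and carries no information. The remaining mode-conditional retrieval term is exactly the TIGER objective over semantic IDs, with a constant prefix inserted. Decoding therefore coincides with TIGER's beam search.

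One caveat I would state explicitly: for fixed $\lambda>0$ the boost $\lambda\log(1+b_{max})$ is finite. It therefore cannot overcome a logit gap that diverges to $-\infty$, so the collapse to TIGER also holds for any fixed $\lambda$.
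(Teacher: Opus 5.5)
Your proof is correct. For Safe Fallback and Organic Integrity it is the paper's own argument: at $\lambda=0$ the additive bid terms vanish, and because the item-level modulation is applied only under $f=\texttt{<AD>}$, the \texttt{<ORG>} branch is decoded on unmodulated logits. Your common-factor remark, that $P_\lambda(\texttt{<ORG>}\mid H,b)$ rescales every organic item's joint probability equally, is a useful point the paper leaves implicit.

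The paper's proof stops after item 2 and gives no argument for Generalization. Your maximum-likelihood argument fills that gap: on organic-only data the slot logit gap diverges, the bounded boost $\lambda\log(1+b_{max})$ cannot offset it for any fixed $\lambda$, and the retrieval term reduces to TIGER's objective up to the constant \texttt{<ORG>} prefix. Your hedges (population optimum, in-distribution contexts) are the right ones to make that informal claim precise.
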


\subsection{Hierarchical Decoding Strategy}\label{sec:hybrid_decoding}

Because our objective explicitly factorizes the slot decision from the item generation (Section 3.3), we naturally employ a hierarchical decoding strategy:
\begin{enumerate}
    \item \textbf{Flag Sampling:} At the first step, we compute modulated logits for $\{ \texttt{<ORG>}, \texttt{<AD>} \}$ and sample the flag immediately. This enforces a hard commitment to the slot type before content generation begins. This is necessary because standard beam search maximizes joint sequence likelihood; without this constraint, the decoder would likely prune sponsored hypotheses in favor of organic sequences that inherently possess a much higher prior likelihood from the training data.
    \item \textbf{Content Beam Search:} Conditioned on the sampled flag, we run beam search with width $K$ on the \emph{modulated} logits $\tilde z$ to generate the Semantic ID tokens.

\end{enumerate}

\section{Experimental Setup}

\subsection{Constructing the Synthetic Marketplace}
To validate GEM-Rec, we require an environment with concurrent organic and sponsored inventory, along with interaction logs generated under a joint system. Standard benchmarks are insufficient as they lack the bid logs and unified interaction history necessary to capture the tension between user intent and platform monetization. In real marketplaces, training data represents a ``survivor'' distribution of items that passed both a platform filter (high enough
bid to secure the display slot) and a user filter (relevance enough to prompt a click). To approximate this, we construct a synthetic marketplace policy ($\pi_{data}$) that generates trajectories reflecting these joint constraints.

\textbf{Rationale for Simulation.} We employ this simulation as a controlled test harness to validate our core architectural contribution: the ability of GEM-Rec to integrate monetization constraints into a generative sequence. If the model can recover the implicit rules of $\pi_{data}$ (learning when users accept ads) from the synthetic logs, it demonstrates the capacity to learn analogous constraints from real-world data where these signals are explicit.

\textbf{Data Generation Process.}
We utilize four datasets, Steam \citep{sasrec}, Amazon Beauty, Sports, and Toys \citep{he2016ups}, to provide base organic inventory and user trajectories. In each dataset, we designate a random 20\% subset as ``Sponsored" candidates and assign them log-normal bids. See Appendix \ref{app:simulation} and \ref{app:construction}.

To approximate the selection funnel typical of industrial ad platforms, we generate interaction trajectories using a Two-Stage Policy:

\begin{enumerate}

\item \textbf{Stage 1 (Retrieval): Semantic Relevance Filter.} The policy first retrieves a candidate set of sponsored items that share a sufficiently deep semantic prefix with the user's organic intent. This enforces a ``User-Centric" constraint, mirroring the reality that users rarely engage with contextually irrelevant ads.

\item \textbf{Stage 2 (Ranking): Probabilistic Auction.} Among the relevant candidates, the winning impression is selected via softmax-weighted sampling based on bid price. This mimics the aggregate outcomes of a competitive auction where higher bidders are statistically more likely to win impressions. \end{enumerate} 

This policy generates a training corpus consistent with a functioning marketplace: observed ad interactions are exclusively items that possessed both high semantic relevance and sufficient economic value. This provides the necessary signal for the generative model to learn a "safe" baseline for valid ad injection. 

While our primary simulated environments feature a conservative baseline ad density (roughly 3-6\%), we also evaluate the model on ablations with a higher baseline ad density of roughly 10-20\% (Appendix \ref{app:ablations}) to test its capacity to learn and adapt to different historical ad frequencies.

\subsection{Evaluation Protocols}
We evaluate GEM-Rec on three axes:
\begin{enumerate}
    \item \textbf{Strict Policy Fit (Total NDCG):} Measures how accurately the model reproduces the test set trajectories generated by $\pi_{data}$. A hit is recorded only if the model predicts \textit{both} the correct slot type (Ad vs. Organic) \textit{and} the correct Item ID. This validates if the model has recovered the latent marketplace constraints.
    
    \item \textbf{Organic Integrity (Conditional Organic NDCG):} We measure the ranking quality of organic items \textit{only} in instances where the model chooses to generate an organic slot. This explicitly tests the ``Preference Preservation" claim of Proposition \ref{prop:consistency}.
    
    \item \textbf{Economic Value \& Steerability:} We quantify the system's responsiveness to the control parameter $\lambda$ using three metrics:
    \begin{itemize}[leftmargin=*, nosep]
        \item \textbf{Ad Rate:} Percentage of generated slots that are ads.
        \item \textbf{Revenue:} Cumulative winning bids for ads shown.\footnote{Consistent with Section \ref{sec:pricing}, we utilize the First-Price mechanism (sum of winning bids) for the \emph{Revenue} metric to align with industrial standards.}
        \item \textbf{Ad Relevance:} NDCG of generated ads relative to the ground-truth user intent.
    \end{itemize}
\end{enumerate}

\subsection{Baselines}
We compare GEM-Rec against \textbf{TIGER} \citep{tiger}, the canonical framework for generative retrieval using semantic IDs. Since TIGER lacks a mechanism for sponsored slots, it serves as the ``Pure Utility" reference point—representing an unconstrained model optimized solely for organic relevance.

\subsection{Implementation Details}
We implement GEM-Rec in PyTorch. As the official source code for TIGER \citep{tiger} is not publicly available, we adopt the codebase from \citep{liger}. We strictly utilize their reproduction of the base TIGER architecture to ensure a fair and reproducible baseline, without employing the \citet{liger}-specific training objectives. We extend this backbone to support the unified organic-sponsored vocabulary and the inference-time GEM-Logits mechanism. We provide the model details in Appendix \ref{app:implementation} and will open-source the code after conference acceptance and publication.

\subsection{Steerability and Ad Rate Control}
A core requirement for industrial recommendation is the ability to throttle ad density dynamically. At $\lambda=0$, the model roughly reproduces the baseline ad density ($\sim$3.5\%) inherent in the training logs. An increase in Ad Rate is structurally guaranteed by the logit modulation. Figure \ref{fig:steer} empirically validates this, showing that the scaling is smooth and predictable. As $\lambda$ increases, the logit boost smoothly shifts the distribution without causing an abrupt collapse or saturation of the generative policy. This confirms that the learned control tokens successfully disentangle slot allocation from content generation.

This confirms that the learned control tokens successfully disentangle slot allocation from content generation. This control is robust: we verified that the model maintains a 100\% validity rate for generated ad identifiers across all settings, ensuring that high bid pressure does not induce hallucinations (see Appendix \ref{app:validity_analysis}).

\begin{figure*}[t!]
    \centering
    \begin{subfigure}[b]{0.48\textwidth}
        \centering
        \includegraphics[width=0.95\linewidth]{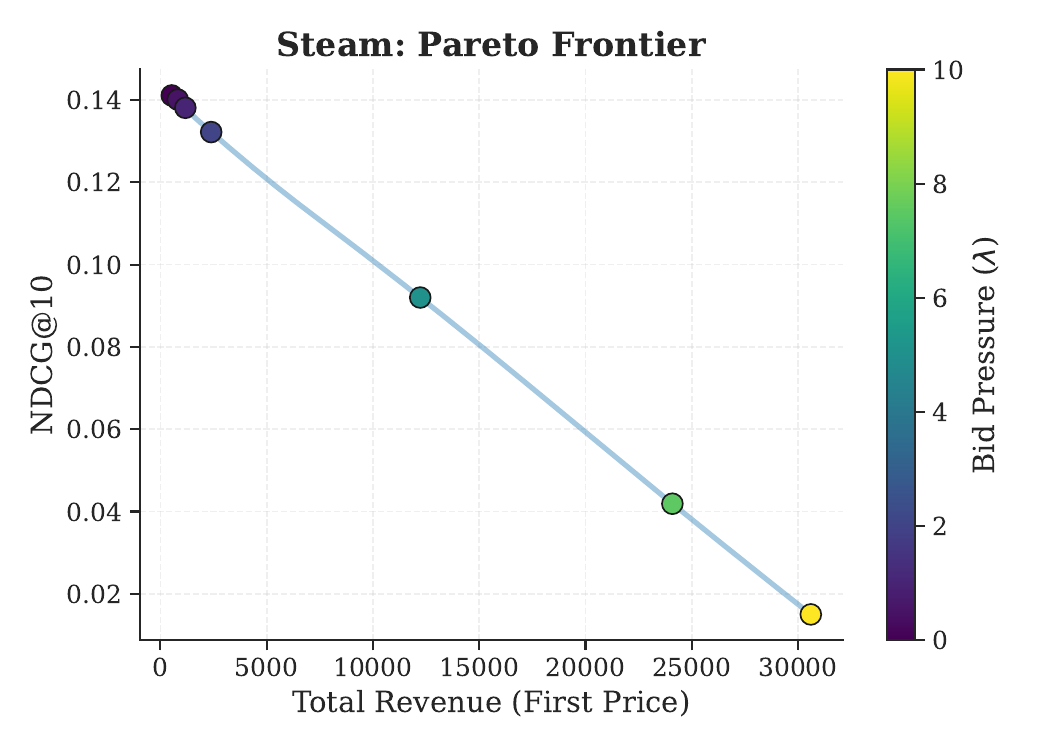}
        \caption{\textbf{Pareto Frontier:} Total Welfare vs. Strict Policy Fit.}
        \label{fig:pareto}
    \end{subfigure}
    \hfill
    \begin{subfigure}[b]{0.48\textwidth}
        \centering
        \includegraphics[width=0.95\linewidth]{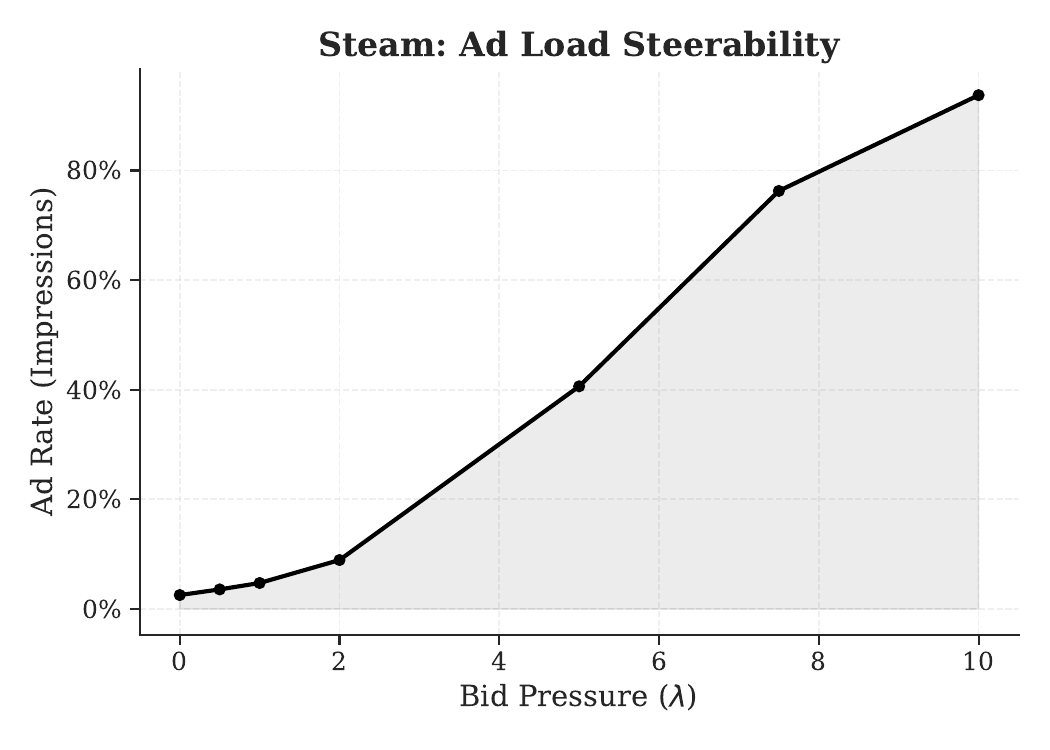}
        \caption{\textbf{Steerability:} Ad Rate Control via $\lambda$.}
        \label{fig:steer}
    \end{subfigure}
    \caption{\textbf{Macro-Dynamics of GEM-Rec. on the Steam dataset} (a) The trade-off between Platform Utility (Revenue) and Policy Fit (NDCG with respect to the training data). We observe a frontier where significant platform utility can be generated before the model deviates significantly from the marketplace policy. (b) The parameter $\lambda$ provides monotonic control over Ad Rate.}
    \label{fig:dynamics}
\end{figure*}

\subsection{The Pareto Frontier: Revenue vs. Policy Fit}
Figure \ref{fig:pareto} illustrates the trade-off between Total Revenue (Allocative Efficiency) and Total NDCG. Ideally, a system should allow the platform to increase economic utility without deviating significantly from learned user policy.

As $\lambda$ increases, revenue rises while Total NDCG declines. This decline is expected: high $\lambda$ values force the model to systematically substitute organic items with ads, creating a proportional divergence from the organic-heavy test set. On the Steam dataset (Figure \ref{fig:pareto}), we observe a near-linear frontier, providing a highly predictable way to trade off policy fit for revenue. We note that the shape of this frontier is partly governed by the underlying data distribution. When the environment contains a higher historical density of ads, for example in the ablations in Appendix \ref{app:ablations}, the model learns that sponsored content is viable across a broader range of sequential contexts. Consequently, when asked to increase the ad rate via $\lambda$, the model can initially satisfy this demand by utilizing these naturally viable slots, creating a more convex frontier with a distinct high-efficiency region before the steeper trade-off begins (e.g., see the high-ad-rate ablations on the Amazon Sports and Outdoors dataset in Appendix \ref{app:convex_eg} and the Amazon Toys and Games dataset in Appendix \ref{app:convex_eg2}). Regardless of the specific curvature, the core dynamic remains consistent: the mechanism allows the platform to smoothly increase revenue without suffering disproportionate, sudden drops in overall relevance.

\begin{table*}[t]
\centering
\setlength{\tabcolsep}{5pt} %
\renewcommand{\arraystretch}{1.1} 
\caption{\textbf{Main Results: Marketplace Performance.} Comparisons of GEM-Rec against the baseline (TIGER). We report separate columns for Economic Utility, Total Metrics, and Metrics for Conditional Generation to isolate the impact of bid modulation. Train Ad \% refers to the \emph{realized} frequency of ads in the training sequence. See full table in Appendix \ref{app:additionalresults}, Table \ref{tab:appendix_full_results}.}
\label{tab:main_results} 
\resizebox{1.0\textwidth}{!}{
\begin{tabular}{ll cc cc cc}
\toprule
& & \multicolumn{2}{c}{\textbf{Economic Utility}} & \multicolumn{2}{c}{\textbf{Total Metrics}} & \multicolumn{2}{c}{\textbf{Metrics for Organic Gen.}} \\
\cmidrule(lr){3-4} \cmidrule(lr){5-6} \cmidrule(lr){7-8} 
\textbf{Dataset} & \textbf{Operating Mode} & \textbf{Ad Rate} & \textbf{Revenue} & \textbf{NDCG@10} & \textbf{Recall@10} & \textbf{O.NDCG@10} & \textbf{O.Recall@10} \\
\midrule

\multirow{3}{*}{\shortstack[l]{\textbf{Steam} \\ \scriptsize{\textit{(Train Ad \%: 3.49)}}}} 
& TIGER (Baseline) & 0.0\% & - & 0.1442$^\dagger$ & 0.1818 & 0.1487 & 0.1875 \\
& GEM-Rec ($\lambda=0.0$) & 2.5\% & 535 & 0.1411 & 0.1782 & 0.1468 & 0.1857 \\
& GEM-Rec ($\lambda=1.0$) & 4.7\% & 1,173 & 0.1381 & 0.1742 & 0.1467 & 0.1853 \\
\midrule \hdashline[1pt/2pt] \noalign{\vskip 0.2em}

\multirow{3}{*}{\shortstack[l]{\textbf{Sports} \\ \scriptsize{\textit{(Train Ad \%: 6.85)}}}} 
& TIGER (Baseline) & 0.0\% & - & 0.0176$^\dagger$ & 0.0308 & 0.0187 & 0.0327 \\
& GEM-Rec ($\lambda=0.0$) & 4.7\% & 804 & 0.0173 & 0.0335 & 0.0187 & 0.0364 \\
& GEM-Rec ($\lambda=1.0$) & 9.0\% & 1,804 & 0.0168 & 0.0323 & 0.0185 & 0.0359 \\
\midrule \hdashline[1pt/2pt] \noalign{\vskip 0.2em}

\multirow{3}{*}{\shortstack[l]{\textbf{Toys} \\ \scriptsize{\textit{(Train Ad \%: 4.52)}}}} 
& TIGER (Baseline) & 0.0\% & - & 0.0278$^\dagger$ & 0.0536 & 0.0291 & 0.0561 \\
& GEM-Rec ($\lambda=0.0$) & 3.5\% & 319 & 0.0277 & 0.0536 & 0.0295 & 0.0570 \\
& GEM-Rec ($\lambda=1.0$) & 7.1\% & 717 & 0.0272 & 0.0522 & 0.0298 & 0.0575 \\
\midrule \hdashline[1pt/2pt] \noalign{\vskip 0.2em}

\multirow{3}{*}{\shortstack[l]{\textbf{Beauty} \\ \scriptsize{\textit{(Train Ad \%: 4.23)}}}} 
& TIGER (Baseline) & 0.0\% & - & 0.0282$^\dagger$ & 0.0529 & 0.0293 & 0.0550 \\
& GEM-Rec ($\lambda=0.0$) & 3.1\% & 345 & 0.0301 & 0.0569 & 0.0318 & 0.0602 \\
& GEM-Rec ($\lambda=1.0$) & 6.0\% & 726 & 0.0295 & 0.0559 & 0.0320 & 0.0606 \\

\bottomrule
\multicolumn{8}{l}{\footnotesize $^\dagger$ Imputed Score: Baseline strictly predicts Organic, effectively scoring 0 on all Ad slots.}
\end{tabular}
}
\end{table*}

\subsection{Verifying Organic Integrity}
\label{sec:decoupling_res}
A central claim of GEM-Rec is that the injection of ads should not degrade the model's understanding of organic user intent. Figure \ref{fig:decoupling} provides empirical validation of the \textbf{Preference Preservation} property in Proposition \ref{prop:consistency}.

The red line shows \textbf{Total NDCG} declining as $\lambda$ increases; this is expected, as high $\lambda$ forces the model to insert ads in slots where the historical policy $\pi_{data}$ would have served organic items. However, the green dashed line, \textbf{Conditional Organic NDCG}, remains effectively flat. Similarly, Table \ref{tab:main_results} shows that \textbf{Organic Recall@10} is preserved (e.g., in Steam: 0.1853 at $\lambda=1.0$ vs 0.1857 at $\lambda=0$).

This stability confirms that the Semantic ID representations are robust. Because GEM-Decoding injects the bid term $\lambda \mathcal{V}(b)$ only into sponsored branch, the logits for the organic branch remain unperturbed, as shown in Proposition \ref{prop:consistency}. This ensures that GEM-Rec allows platforms to monetize further without introducing "noise" into the organic ranking logic.

\begin{figure*}[t!]
    \centering
    \begin{subfigure}[b]{0.44\textwidth}
        \centering
        \includegraphics[width=0.95\linewidth]{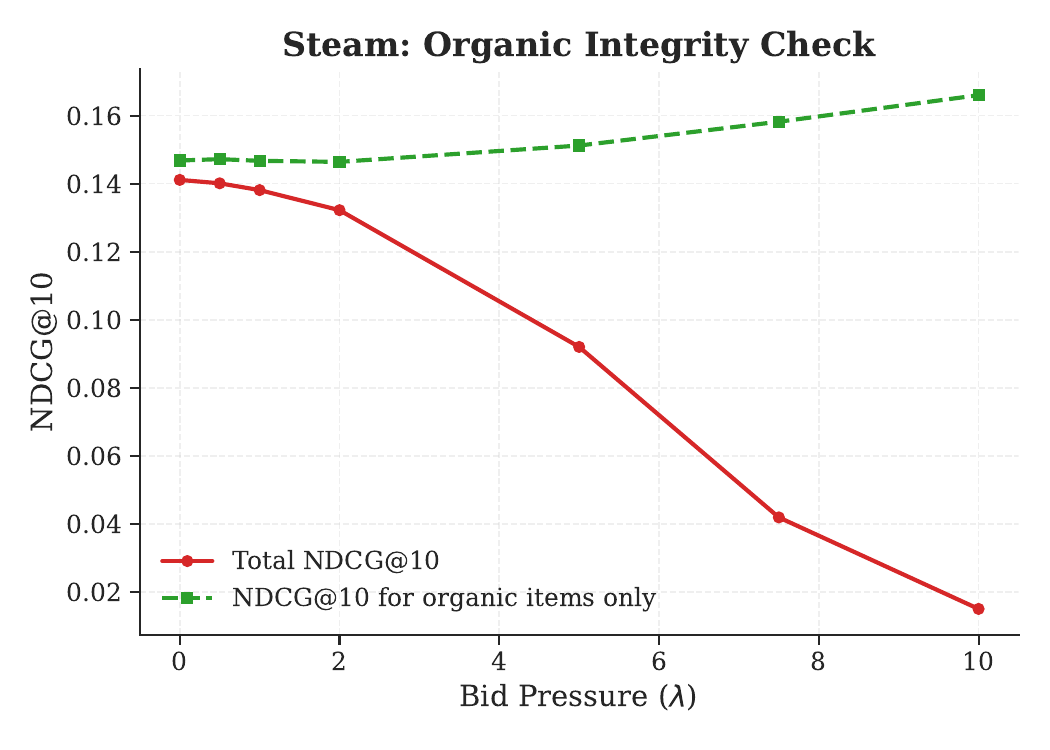}
        \caption{\textbf{Decoupling Verification:} Strict vs. Organic Integrity.}
        \label{fig:decoupling}
    \end{subfigure}
    \hfill
    \begin{subfigure}[b]{0.5\textwidth}
        \centering
        \includegraphics[width=0.95\linewidth]{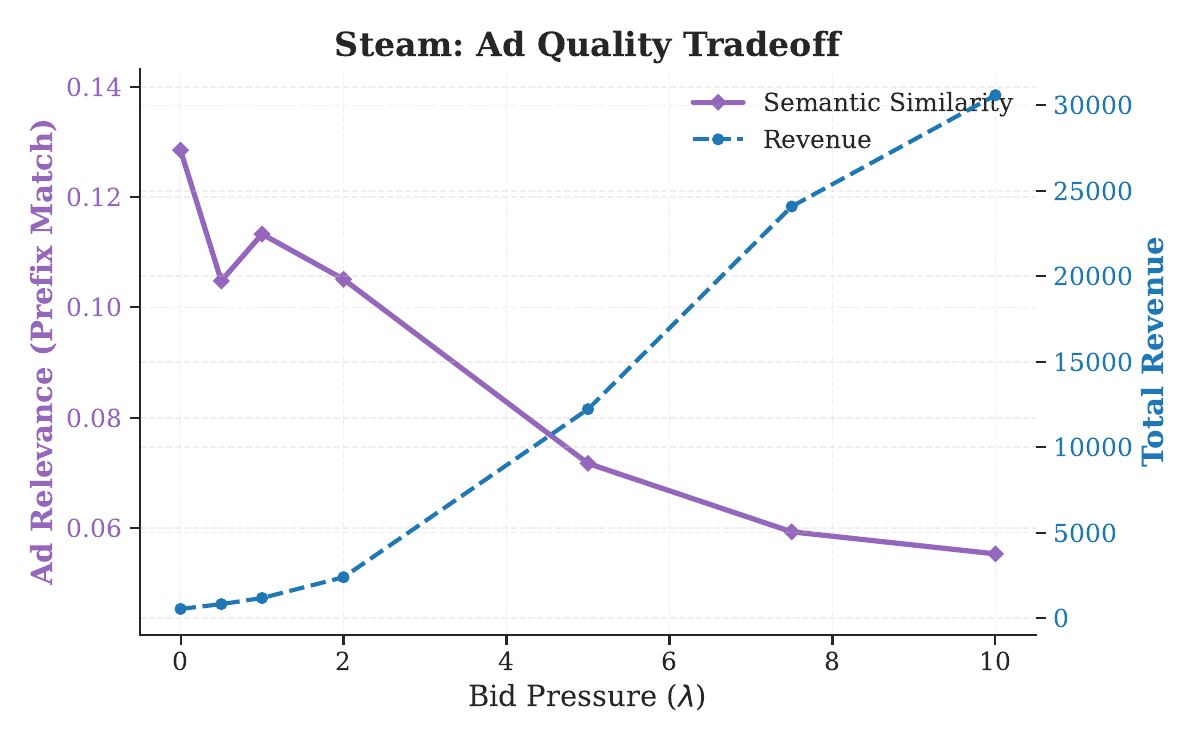}
        \caption{\textbf{Ad Quality Trade-off:} Welfare vs. Semantic Similarity.}
        \label{fig:quality}
    \end{subfigure}
    \caption{\textbf{Micro-Analysis of Recommendation Quality on the Steam dataset.} (a) A validation of our central hypothesis: while NDCG drops as we force ads (Red line), the \textit{Conditional Organic NDCG} (Green line) remains relatively flat. This shows that increasing $\lambda$ changes \textit{when} ads are shown, but does not degrade the model's understanding of user intent for organic slots. (b) The cost of monetization: higher welfare requires displaying ads with lower semantic similarity to the user's organic target.}
    \label{fig:quality_analysis}
\end{figure*}

\subsection{Ad Quality and Signal Restoration}
We utilize Prefix Match Depth to quantify the quality and relevance of generated ads. This metric measures the average number of hierarchical Semantic ID codes shared between the generated ad and the user's organic ground truth, where a depth of $\ge 2$ typically indicates sub-category alignment. 

In general, we observe that ad relevance decreases as $\lambda$ increases (Figure \ref{fig:quality}). This confirms the expected tension in the marketplace: as the system prioritizes economic value (high bids), it is inevitably forced to select items that are semantically further from the user's pure organic intent. 

This is the dominant trend across our experiments (see complete visualizations in Appendix \ref{app:additionalresults}). In general, the generative model successfully internalizes the marketplace policy during training, effectively learning to identify high-relevance items that are also valid ad candidates. Consequently, the unmodulated model ($\lambda=0$) already operates near the semantic optimum; adding further explicit bias ($\lambda > 0$) forces the model to deviate from this learned policy, degrading relevance.

We did observe minor localized exceptions (e.g., the slight non-monotonicity in Figure \ref{fig:quality} or the initial relevance gain in Appendix Figure \ref{fig:sports_full}). We attribute this to the resolution of Semantic Ambiguity. In sparse datasets, the model often correctly identifies the target \textit{category} (e.g., ``running shoes") but lacks sufficient samples to rank specific items within that cluster. However, because the training data is derived from realized auction logs, the ``ground truth" item was historically likely to be a high bidder. Consequently, injecting bid information at inference acts as a context-aware tie-breaker: it guides the model toward the high-bidder within the correct semantic cluster, effectively aligning the prediction with the historical selection bias. Despite these localized recoveries, the broader system dynamic remains a convex trade-off between short-term revenue and semantic similarity.

While our current framework successfully navigates this via a unified parameter $\lambda$, future work or practical deployments could explore decoupling this mechanism. Introducing independent scaling parameters for slot allocation ($\lambda_{slot}$) and item selection ($\lambda_{item}$) would provide platforms with even more fine-grained control, allowing them to precisely calibrate the balance between overall ad frequency and bid-aware ranking.

\subsection{Dynamic Adaptation to Market Shifts}

In real-world marketplaces, advertiser valuations are volatile; bids can spike suddenly due to seasonal events or inventory demand. A critical advantage of GEM-Rec is its inference-time plasticity—the ability to pivot towards these real-time opportunities without model retraining.

To demonstrate this, we simulate a ``Bid Shock": we randomly select 5\% of the inventory and multiply their bids by $10\times$. We evaluate the system's efficiency by measuring the \textbf{High-Value Share}: the percentage of \emph{displayed ads} that belong to this high-paying subset.

\textbf{Results}. Table \ref{tab:shock_steam_main} illustrates the system's rapid adaptation on the Steam dataset. Qualitatively similar results for all other datasets are provided in Appendix \ref{app:volatility}. When $\lambda=0$, The model relies on historical priors. While it displays some ads (2.4\%), only 21.8\% of them come from the high-value group. The system effectively ignores the shift in demand. When $\lambda=0.5$: With even modest modulation, the composition transforms. While the total Ad Rate rises moderately to 7.1\%, the High-Bid Share jumps to \textbf{81.5\%}. 

This demonstrates that GEM-Rec does not merely increase ad volume; it actively substitutes low-value impressions with high-value ones, achieving a $9\times$ revenue uplift while keeping the overall ad load conservative and controllable.

\begin{table}[h]
\centering
\caption{\textbf{Response to Market Volatility (Steam).} We analyze the composition of generated ads before and after a price shock. At $\lambda=0.5$, the system aggressively targets the high-bidders. Note that while the total Ad Rate remains low (7.1\%), the High-Value Share among ads dominates, proving the model is substituting low-value inventory for high-value inventory.}
\label{tab:shock_steam_main}
\resizebox{0.7\textwidth}{!}{
\begin{tabular}{l c c c}
\toprule
\textbf{Setting} & \textbf{Ad Rate} & \textbf{\shortstack{High-Value Share \\ (\% of Ads)}} & \textbf{Revenue} \\
\midrule
Baseline ($\lambda=0.0$) & 2.4\% & 21.8\% & $1.0\times$ \\
GEM-Rec ($\lambda=0.5$) & 7.1\% & \textbf{81.5\%} & $9.0\times$ \\
GEM-Rec ($\lambda=1.0$) & 18.0\% & \textbf{97.4\%} & $28.2\times$ \\
GEM-Rec ($\lambda=2.0$) & 62.4\% & \textbf{99.9\%} & $104.6\times$ \\
\bottomrule
\end{tabular}
}
\end{table}
\section{Conclusion}

In this work, we introduced \textbf{GEM-Rec} to bridge the structural gap between generative recommendation and marketplace monetization. By unifying organic and sponsored items within a single vocabulary, our framework internalizes complex marketplace constraints, moving beyond disjointed architectures. To ensure controllability, we developed a bid-aware decoding mechanism that disentangles semantic retrieval from economic valuation. This preserves organic user preferences while enabling inference-time adaptation to volatile bids without retraining. \textbf{GEM-Rec} establishes a robust framework for a unified generative recommender capable of simultaneously optimizing semantic relevance and platform revenue.

\bibliography{main}

\newpage
\appendix
\section*{Appendix}
\section{Additional Related Works}
\label{app:more_related}

A fast-growing literature studies how discrete item representations, tokenization, and decoding choices shape generative recommendation. \citet{grid} systematize the design space of Semantic-ID recommenders, demonstrating how indexing strategies, model architectures, and decoding constraints materially impact performance. Moving beyond fixed, precomputed indices, DIGER \citep{fu2026differentiable} introduces differentiable semantic indexing, allowing recommendation gradients to directly update the semantic index while mitigating codebook collapse. Focusing on generation dynamics, MindRec \citep{gao2025mindrec} departs from strict left-to-right autoregressive decoding, leveraging a masked diffusion process to generate recommendations in a coarse-to-fine, non-sequential manner. Finally, bridging IDs and language models, TokenRec \citep{qu2025tokenrec} proposes quantizing collaborative filtering representations into discrete tokens, optimizing how user and item semantics are tokenized and aligned for LLM-based recommendation. While these works explore alternative tokenization and decoding paradigms, our work builds upon the canonical Semantic-ID architecture \citep{tiger} to clearly isolate the impact of integrating real-time economic constraints into generative retrieval.

Our setting inherits core ideas from computational advertising and auction design.
Classical sponsored-search auctions such as the generalized second-price (GSP) mechanism are a standard reference point for bid-based ranking and pricing \citep{edelman2007internet}, but there have been an increasing ecosystem’s shift toward first-price auctions to reduce complexity and improve transparency \citep{bigler2019updateFirstPriceAdManager, googleAdSense2021movingFirstPriceAuction, openx2017transparentFirstPriceBlog, pubmatic2018firstPriceAuctionsAuctionDynamics, microsoftxandr2025auctionOverview, amazonaps2023apsUpdateFirstPrice, magnite2021ctvMovingToFirstPrice}. This shift has also motivated work studying why the change happens and analyzing equilibrium bidding and bid shading under first-price dynamics in modern ad exchanges \citep{despotakis2021first, sivan2020competitive}. In our setting, we treat bids as exogenous inputs available at serving time (as in standard ad auction interfaces), rather than modeling advertisers’ private values or strategic bid shading. We therefore use first-price payments as a transparent, low-overhead accounting rule consistent with common practice; incentive questions and DSIC-style payments are discussed separately in Appendix~\ref{app:stochastic_ic}.

On the machine learning side, a growing literature uses deep learning parameterizations to optimize auctions and other mechanisms subject to incentive constraints (e.g., learning allocation/payment networks) \citep{dutting2024optimal, ravindranath2023data, wang2024gemnet, wang2025bundleflow, feng2018deep, curry102automated}.
Our setting is different. GEM-Rec does not learn a new auction mechanism or payment network. GEM-Rec is designed around \emph{inference-time} bid modulation of a fixed generative policy: we explicitly separate (i) learning a feasibility baseline from logs from (ii) injecting current bids through constrained decoding to dynamically steer generation towards high-value items in real time, without retraining the underlying generative model or altering the standard first-price payment rule.

A parallel line of research explores the broader intersection of generative models, game theory, and mechanism design. Recent works have increasingly utilized generative AI to formalize strategic incentives, proxy human preferences in complex allocations, and align model outputs with economic constraints \citep{daskalakis2024charting, soumalias2025llm, huang2025accelerated, jiang2025incentive, shah2025learning}. While this growing literature primarily casts the generative model as an agentic participant, judge, or a behavioral proxy within economic systems, GEM-Rec shifts the focus toward platform-side market design: embedding computational advertising and real-time bids mechanics directly into the token generation and decoding processes of a Semantic ID-based generative recommender.

\section{Experiment Details}
\label{app:simulation}

\subsection{Ad Simulation Details}
To validate GEM-Rec in a controlled setting involving concurrent organic and sponsored inventory, we construct a synthetic environment derived from standard datasets. This approach ensures that the training data reflects a policy where ads are only displayed when they are both semantically relevant and contextually appropriate. We apply standard 5-core filtering to each dataset in line with \cite{liger} to define the ground-truth semantic space. 

\subsection{Model Details}
For the model architecture, we employ a \textbf{T5-based encoder-decoder}. Semantic IDs are generated via a Residual Quantized VAE (RQ-VAE) with a codebook size of 256 and a code depth of 3, trained on Sentence-T5 embeddings of the item metadata. All experiments are conducted on a single NVIDIA A100 GPU. Inference sweeps are performed with $\lambda \in [0.0, 10.0]$ to trace the full Pareto frontier.

\section{Theoretical Proofs}
\label{app:theory}

\textbf{Proof for Proposition \ref{thm:bidmonotonicity}:}

\begin{proof}
The allocation rule $x_i(b)$ factors into the slot probability $P(\texttt{<AD>})$ and the outcome of the beam search $\mathcal{D}_K$. We assume a one-to-one mapping between items and Semantic IDs, a fixed set of eligible candidates $\mathcal{I}_{elig}$ independent of bids, and that $\mathcal{D}_K$ uses deterministic beam search with fixed tie-breaking (no stochastic sampling).

We define the bid-lookup function $\mathcal{B}(c_t \mid h_{<t})$ for a next token $c_t$ given history $h_{<t}$ as the maximum bid among all items in that subtree:
$$ \mathcal{B}(c_t \mid h_{<t}) = \max \{ b_j \mid j \in \mathcal{I}_{elig} \cap \text{subtree}(h_{<t} \oplus c_t) \} $$

\textbf{Part 1: Slot Decision.}
The probability of the sponsored slot is given by $P(\texttt{<AD>}) \propto \exp(z_{\texttt{<AD>}} + \lambda \log(1 + b_{\max}))$.
Here, $b_{\max} = \max_{j \in \mathcal{I}_{elig}} b_j$. Since $\mathcal{I}_{elig}$ is fixed, $b_{\max}$ is non-decreasing in $b_i$. Since the softmax function is strictly increasing with respect to its target logit (holding others fixed), $P(\texttt{<AD>})$ is non-decreasing in $b_i$.

\textbf{Part 2: Beam Search Monotonicity.}
Conditioned on $f=\texttt{<AD>}$, the model performs deterministic beam search. An item $i$ is selected if and only if its partial hypothesis $h_i^{(t)}$ remains in the top-$K$ set at every decoding step $t=1 \dots D$, and it is the top-ranked complete sequence at step $D$.

Let $S_t(h; b)$ be the cumulative log-probability score of a partial hypothesis $h$ at step $t$ given bid profile $b$. We prove that increasing $b_i$ to $b'_i$ strictly improves (or maintains) the score of $i$'s hypothesis $h_i$ relative to any competitor $h_j$ \textbf{at every step} $t$.

The score is a sum of step-wise log-probabilities: $S_t(h) = \sum_{k=1}^t \log P(x_k \mid x_{<k})$.

\begin{enumerate}
    \item \textbf{Target Hypothesis ($h_i$):} Consider the hypothesis corresponding to item $i$. At any step $k$, the token $x_k$ is an ancestor of item $i$, so $i \in \text{subtree}(x_{<k} \oplus x_k)$.
    Since $\mathcal{B}(x_k \mid x_{<k})$ is a maximum over a set containing $i$, increasing $b_i$ implies $\mathcal{B}$ is non-decreasing. Thus, the numerator logit $\tilde{z}_{target}$ is non-decreasing.
    The softmax probability for the target token is $p_{target} = \frac{e^{\tilde{z}_{target}}}{e^{\tilde{z}_{target}} + C}$, where $C$ is the sum of exponentiated logits of all other tokens. Since $C$ is independent of $b_i$ (as $i$ only appears in the target branch), $p_{target}$ is strictly increasing in $\tilde{z}_{target}$.
    Thus, $S_t(h_i; b'_i) \ge S_t(h_i; b_i)$. The target score is \textbf{non-decreasing}.

    \item \textbf{Competitor Hypothesis ($h_j, j \neq i$):} Consider any competitor hypothesis $h_j$. Let $\tau$ be the step where $h_j$ diverges from $h_i$.
    \begin{itemize}
        \item \textbf{Shared Prefix ($k < \tau$):} The tokens are identical to $h_i$. As established above, the log-probability for these tokens is non-decreasing in $b_i$.
        \item \textbf{Divergence Step ($k = \tau$):} $h_j$ selects a token $c_{j} \neq c_{i}$. Since $j \neq i$ and they diverge here, $i \notin \text{subtree}(h_{<\tau} \oplus c_j)$. Therefore, the numerator logit for the competitor depends on $\mathcal{B}(c_j)$, which does not include $i$ in its maximization set. Thus, the competitor's numerator is constant w.r.t $b_i$.
        However, the partition function $Z$ includes the term $e^{\tilde{z}_{i}}$ for the target token $c_i$. Since $\tilde{z}_{i}$ is non-decreasing in $b_i$, $Z$ is non-decreasing.
        The log-probability is $\log p_{comp} = z_{comp} - \log Z$. Since $Z$ is non-decreasing, $\log p_{comp}$ is \textbf{non-increasing}.
        \item \textbf{Post-Divergence ($k > \tau$):} The subtree of $h_j$ is disjoint from $i$. All logits and partition functions for these steps are independent of $b_i$.
    \end{itemize}
    
    \item \textbf{Relative Rank at step $t$:} We compare the score gap $\Delta_t(b) = S_t(h_i^{(t)}; b) - S_t(h_j^{(t)}; b)$.
    \begin{itemize}
        \item For $k < \tau$, changes cancel out (identical terms).
        \item At $k = \tau$, the target term is non-decreasing, while the competitor term is non-increasing. The partition function cancels out in the difference, but the monotonic divergence remains. Thus the gap contribution increases or stays constant.
        \item For $k > \tau$, terms for $h_i$ are non-decreasing, terms for $h_j$ are constant.
    \end{itemize}
    Therefore, for any step $t$ and any competitor $h_j$, the gap $S_t(h_i^{(t)}) - S_t(h_j^{(t)})$ is monotonically non-decreasing in $b_i$.
\end{enumerate}

Since the score of $h_i^{(t)}$ cannot decrease relative to any competitor $h_j^{(t)}$, the rank of $h_i^{(t)}$ among all partial hypotheses at depth $t$ cannot worsen. 
Consequently, if $h_i$ survived the top-$K$ pruning at every step $t$ under bid $b_i$, it must also survive at every step under $b'_i > b_i$. Thus, the item $i$ remains generated.
\end{proof}

\textbf{Proof for Proposition \ref{prop:consistency}:}

\begin{proof}
These properties follow directly from the definition of the modulation mechanism in Section \ref{sec:mechanism}.
\begin{enumerate}
    \item \textbf{Safe Fallback:} By construction, the modulated logit is $\tilde{z} = z + \lambda \cdot \Delta$. If $\lambda=0$, then $\tilde{z} = z$ for all tokens, recovering the exact behavior of the base model trained on historical logs.
    \item \textbf{Organic Integrity:} The modulation term in Eq. (4) is applied only conditional on the flag $f=\texttt{<AD>}$. If the sampled flag is $f=\texttt{<ORG>}$, the decoding proceeds using unmodulated logits $z$. Thus, the relative ranking of any two organic items $i, j$ is determined solely by the pre-trained weights $\theta$, invariant to $\lambda$.
\end{enumerate}
\end{proof}

\section{Details on Dataset Construction}
\label{app:construction}
\subsection{Inventory and Bids}
For each dataset, we designate a random $20\%$ subset of the total item universe as the Sponsored Inventory $\mathcal{I}_{ad}$. To simulate a realistic long-tail distribution of advertiser valuations, we assign a static bid $b_i$ to each sponsored item $i \in \mathcal{I}_{ad}$ drawn from a Log-Normal distribution:
\begin{equation}
    b_i \sim \text{LogNormal}(\mu=0.0, \sigma=0.2)
\end{equation}
Bids are clipped at the 99.9th percentile and normalized to the range $[0.1, 1.0]$ to prevent numerical instability.

\subsection{Data Generation Policy}
We generate training trajectories by replaying the user's organic history and dynamically injecting ads. For every organic item $i_{org}$ in the user's sequence, the simulator determines if an ad should be displayed via a three-stage process:

\textbf{1. Semantic Relevance Filter.}
To ensure ads are contextually relevant, a sponsored item $j \in \mathcal{I}_{ad}$ is considered a valid candidate only if it shares a hierarchical prefix with the organic target $i_{org}$. We require a prefix match depth of $d=2$ (sharing the first two codes of the Semantic ID). If no candidates are found, the criterion is relaxed to $d=1$.

\textbf{2. Probabilistic Selection.}
Among the relevant candidates, a winner is selected probabilistically based on bid magnitude. This mimics a selection process where higher bidders are prioritized but do not have a deterministic guarantee of winning. The probability of selecting candidate $j$ is given by a softmax function over the bids:
\begin{equation}
    P(j) = \frac{\exp(b_j / \tau)}{\sum_{k} \exp(b_k / \tau)}
\end{equation}
We set the temperature $\tau = 0.1$, creating a distribution that strongly favors higher-value items.

\textbf{3. Frequency Capping.}
To model user fatigue and prevent ad saturation, we enforce a spacing constraint. The probability of accepting an ad impression is scaled by a recovery function based on the time steps $\Delta t$ since the last ad exposure:
\begin{equation}
    P(\text{Display}) = p\min(1.0, \Delta t \times r)
\end{equation}
This linear recovery scales the base acceptance rate $p$ by a fatigue factor that recovers at rate $r$, representing how people are averse against repeated exposure of ads. In the main experiments, we set the base rate $p=0.4$ and recovery rate $r=0.05$, meaning full acceptance probability is recovered after 20 steps. In the ablation study (Appendix \ref{app:ablations}), we explore a high-density setting with $p=1.0$ and $r=0.5$, which allows full recovery after just 2 steps.

\section{Truthfulness and Decoding: Tradeoffs and Constraints}
\label{app:stochastic_ic}

We adopt first-price payments in our experiments because they are simple to deploy, in line with practical trends, and align with the inference-time objective used by our decoder: once an ad is selected, the platform can charge the bid directly with no additional counterfactual computation.

At the same time, it is natural to ask whether one could instead implement a DSIC mechanism (e.g., a critical-bid or VCG-style payment rule). In mechanism design literature, DSIC is usually desired as it theoretically guarantees truthfulness for bids and prevents strategic deviations such as bid shading. Relatedly, incentive alignment has been studied in the context of generative pipelines \citep{jiang2025incentive}.

In single-shot, single-parameter auctions, however, DSIC typically relies on two ingredients: (i) an allocation rule that is monotone in each bidder's bid, and (ii) payments defined by a threshold---the minimum bid at which the bidder would still receive the allocation.

In our setting, the ``allocation rule'' is induced by an autoregressive decoding procedure. Even when the scoring function used by the decoder is monotone in bids (as in Proposition~\ref{thm:bidmonotonicity}), a full DSIC implementation must define payments with respect to the \emph{implemented} decoding algorithm. Concretely, the relevant threshold is the smallest bid at which the algorithm would still produce the same allocation under the same context and competing bids. Determining this threshold generally requires reasoning about counterfactual decoding outcomes (what would have been generated if a bidder's bid were slightly lower), which can be computationally expensive in sequential generation. 

Moreover, modern decoding pipelines often include approximations (e.g., limited-width beam search or other search heuristics). These approximations make it harder to treat the allocation as an explicit function with a tractable threshold characterization, because small bid changes can alter intermediate search states and pruning decisions. As a result, deriving and computing exact truthful payments on top of a practical decoding stack can require additional algorithmic structure beyond the model architecture itself.

Overall, first-price payments provide a robust and low-overhead baseline consistent with current practice.  Designing an exact DSIC variant for autoregressive recommendation---with provable guarantees under a realistic decoding algorithm and with acceptable compute---is an important direction for future work.

\section{Implementation Details}
\label{app:implementation}

\textbf{Model Architecture.}
We implement GEM-Rec using a T5 encoder-decoder architecture configured to match the scale of baseline implementation in \citep{liger}.
\begin{itemize}
    \item \textbf{Encoder/Decoder:} 6 layers each.
    \item \textbf{Hidden Dimensions:} $d_{model}=128$, $d_{ff}=1024$.
    \item \textbf{Attention:} 6 heads with $d_{kv}=64$.
    \item \textbf{Dropout:} 0.2.
\end{itemize}

\textbf{Semantic IDs (RQ-VAE).}
Item codes represent a hierarchy of depth 3, learned via RQ-VAE with a codebook size of 256 per level. The VAE is trained for 8,000 epochs using the AdamW optimizer.

\textbf{Training.}
The generative model is trained for 100,000 steps with a batch size of 256. We use the AdamW optimizer with a weight decay of 0.035. The learning rate is initialized at $3e^{-4}$ with a linear warmup of 10,000 steps, followed by cosine decay. All experiments are conducted on a single NVIDIA A100 GPU.

\section{Additional Experimental Results}
\label{app:additionalresults}

\subsection{Full result for main experiment}

We provide the full suite of analysis figures for the Beauty, Sports, and Toys datasets, complementing the Steam results presented in the main text. At the end, we provide Table \ref{tab:appendix_full_results} with all the detailed numerical results.

\subsubsection{Beauty Dataset}

\begin{figure*}[h!]
    \centering
    \begin{subfigure}[b]{0.48\textwidth}
        \centering
        \includegraphics[width=0.95\linewidth]{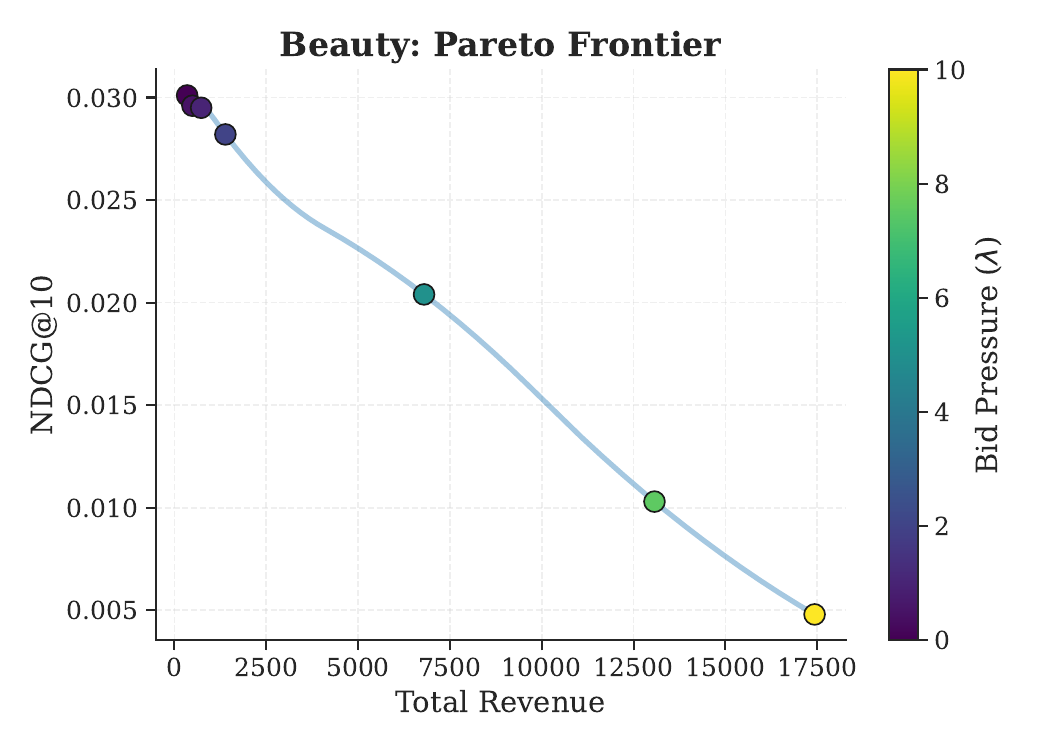}
        \caption{\textbf{Pareto Frontier} (Beauty): Revenue vs. Policy Fit.}
        \label{fig:beauty_pareto}
    \end{subfigure}
    \hfill
    \begin{subfigure}[b]{0.48\textwidth}
        \centering
        \includegraphics[width=0.95\linewidth]{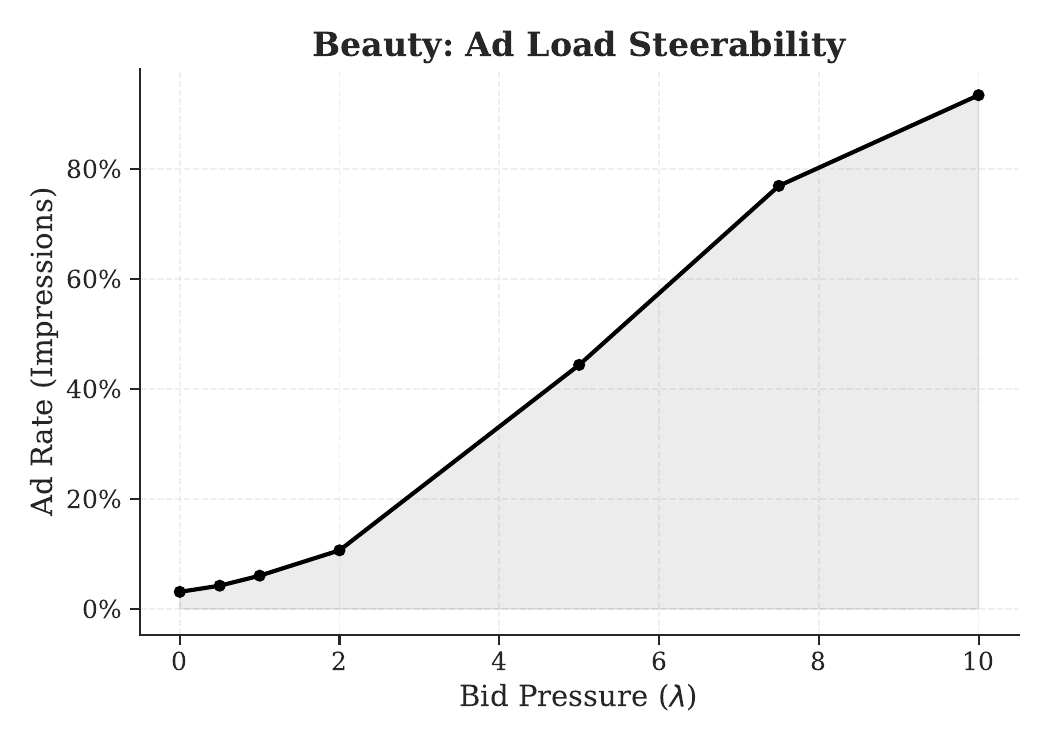}
        \caption{\textbf{Steerability} (Beauty): Ad Rate Control.}
        \label{fig:beauty_steer}
    \end{subfigure}
    
    \vspace{1em} %
    
    \begin{subfigure}[b]{0.44\textwidth}
        \centering
        \includegraphics[width=0.95\linewidth]{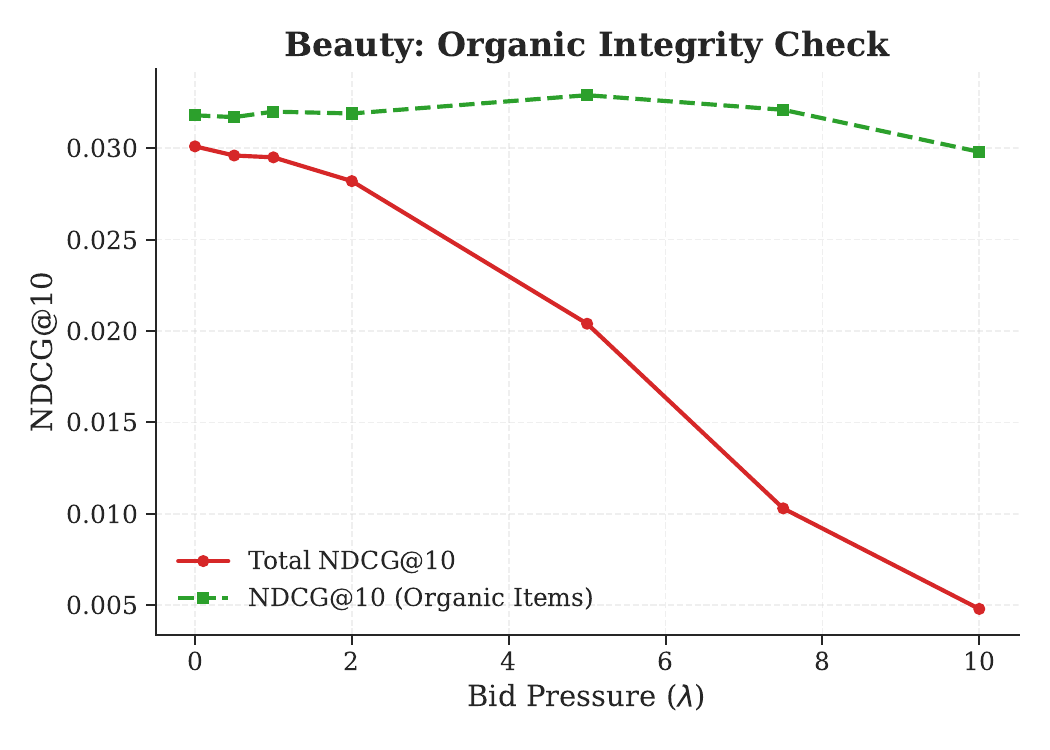}
        \caption{\textbf{Organic Integrity} (Beauty).}
        \label{fig:beauty_integrity}
    \end{subfigure}
    \hfill
    \begin{subfigure}[b]{0.5\textwidth}
        \centering
        \includegraphics[width=0.95\linewidth]{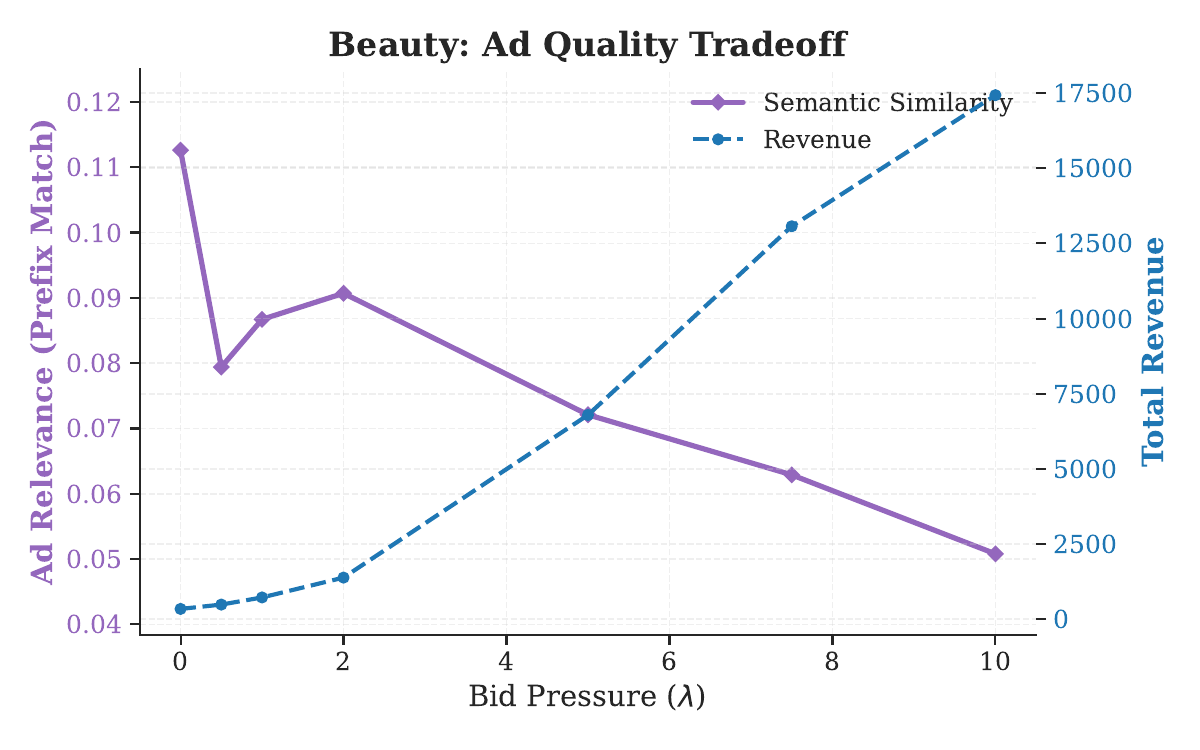}
        \caption{\textbf{Quality Trade-off} (Beauty): Welfare vs. Semantic Similarity.}
        \label{fig:beauty_quality}
    \end{subfigure}
    
    \caption{\textbf{Full Dynamics on Beauty Dataset.} Macro-dynamics (top row) and Micro-analysis (bottom row) showing consistent behavior with the Steam dataset.}
    \label{fig:beauty_full}
\end{figure*}
\clearpage

\subsubsection{Sports Dataset}
\label{app:sports_figs_main}

\begin{figure*}[h!]
    \centering
    \begin{subfigure}[b]{0.48\textwidth}
        \centering
        \includegraphics[width=0.95\linewidth]{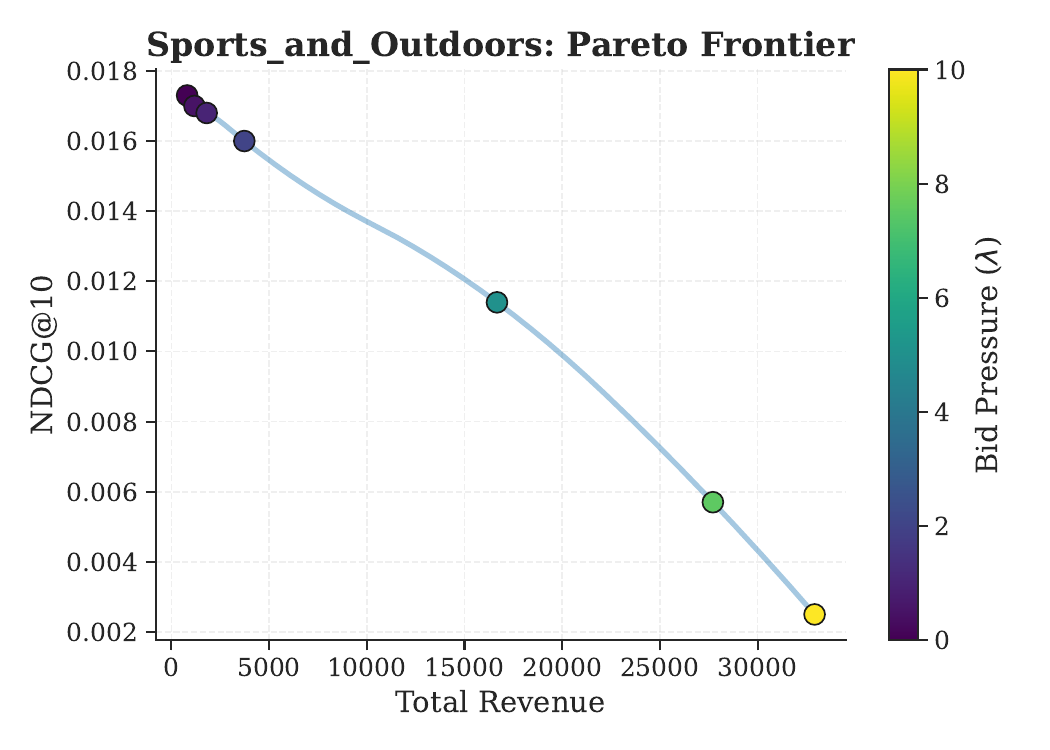}
        \caption{\textbf{Pareto Frontier} (Sports).}
    \end{subfigure}
    \hfill
    \begin{subfigure}[b]{0.48\textwidth}
        \centering
        \includegraphics[width=0.95\linewidth]{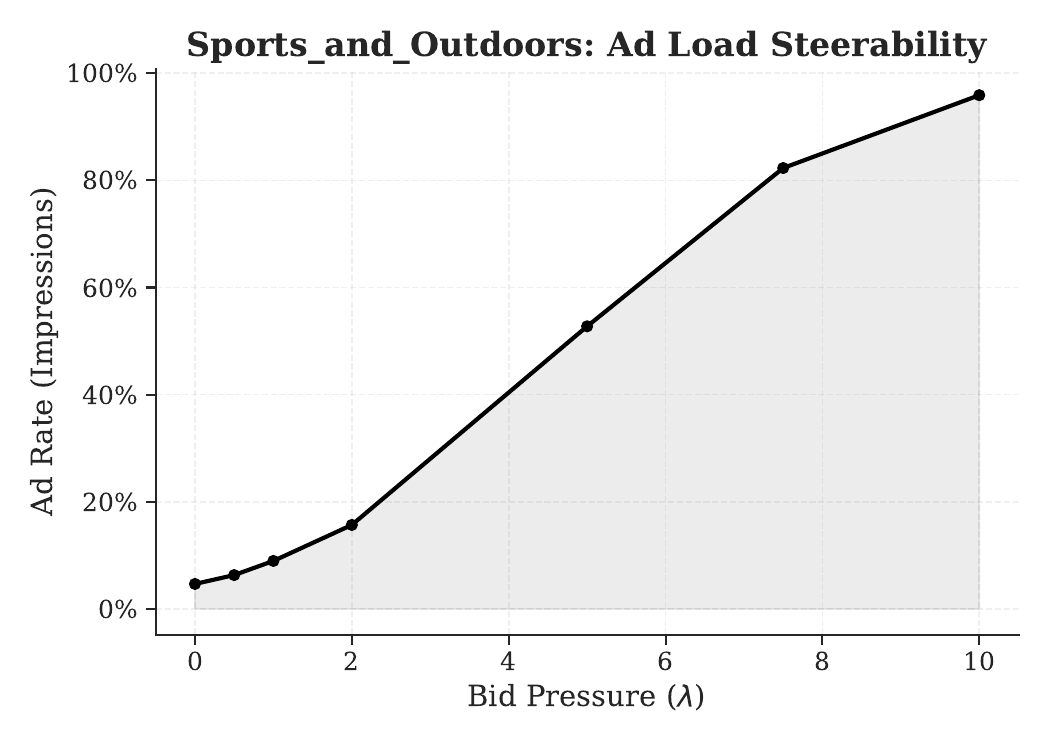}
        \caption{\textbf{Steerability} (Sports).}
    \end{subfigure}
    
    \vspace{1em}
    
    \begin{subfigure}[b]{0.44\textwidth}
        \centering
        \includegraphics[width=0.95\linewidth]{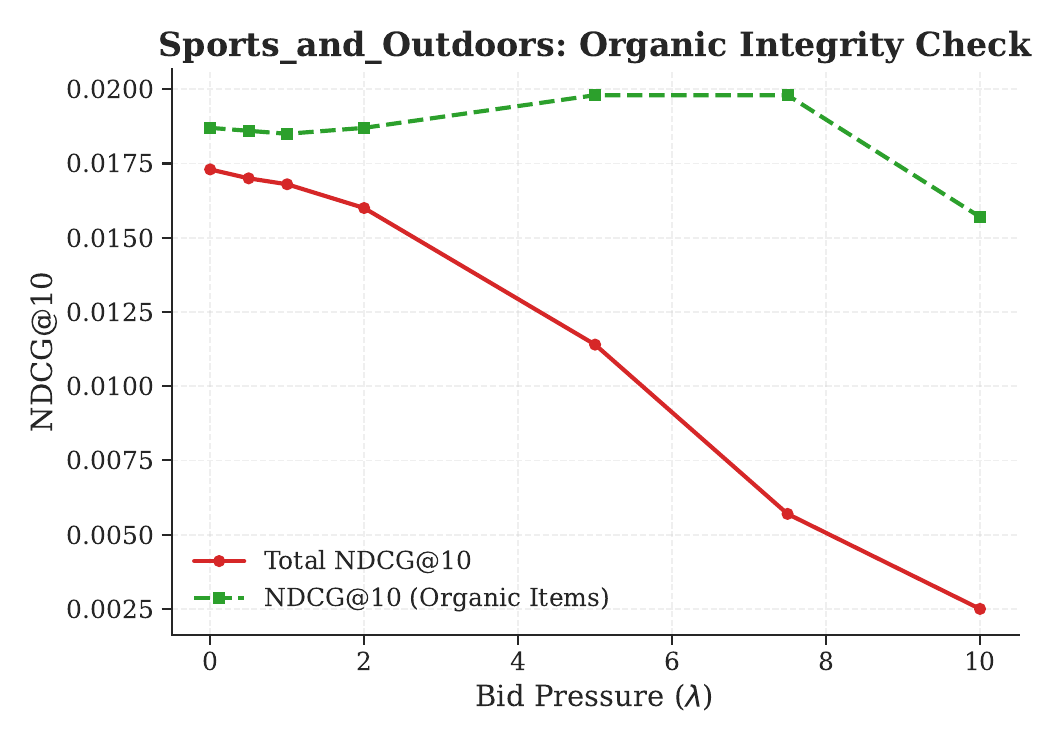}
        \caption{\textbf{Organic Integrity} (Sports).}
    \end{subfigure}
    \hfill
    \begin{subfigure}[b]{0.5\textwidth}
        \centering
        \includegraphics[width=0.95\linewidth]{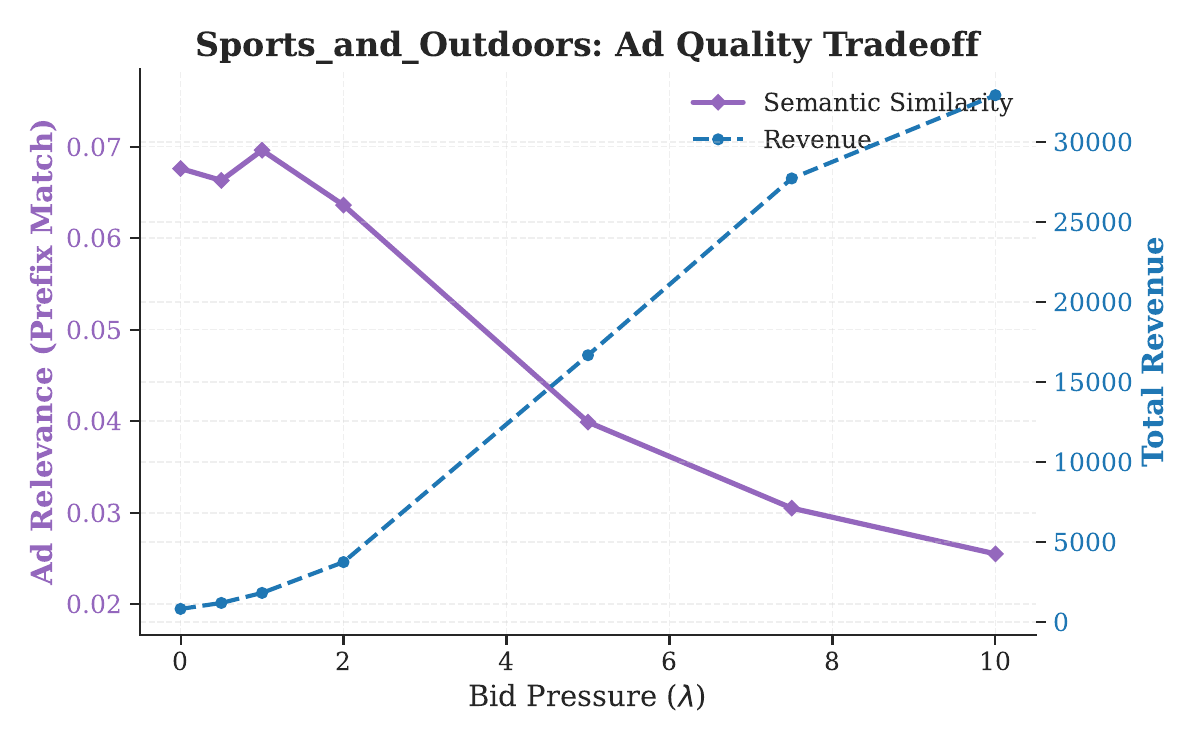}
        \caption{\textbf{Quality Trade-off} (Sports).}
    \end{subfigure}
    
    \caption{\textbf{Full Dynamics on Sports Dataset.}}
    \label{fig:sports_full}
\end{figure*}
\clearpage

\subsubsection{Toys Dataset}

\begin{figure*}[h!]
    \centering
    \begin{subfigure}[b]{0.48\textwidth}
        \centering
        \includegraphics[width=0.95\linewidth]{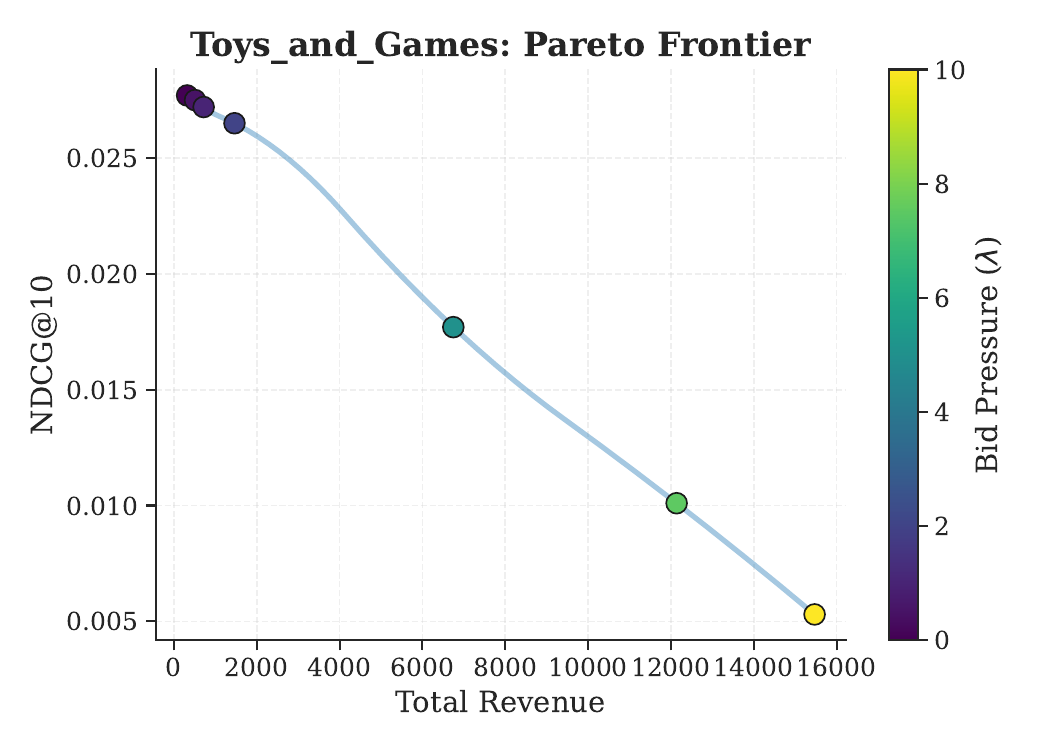}
        \caption{\textbf{Pareto Frontier} (Toys).}
    \end{subfigure}
    \hfill
    \begin{subfigure}[b]{0.48\textwidth}
        \centering
        \includegraphics[width=0.95\linewidth]{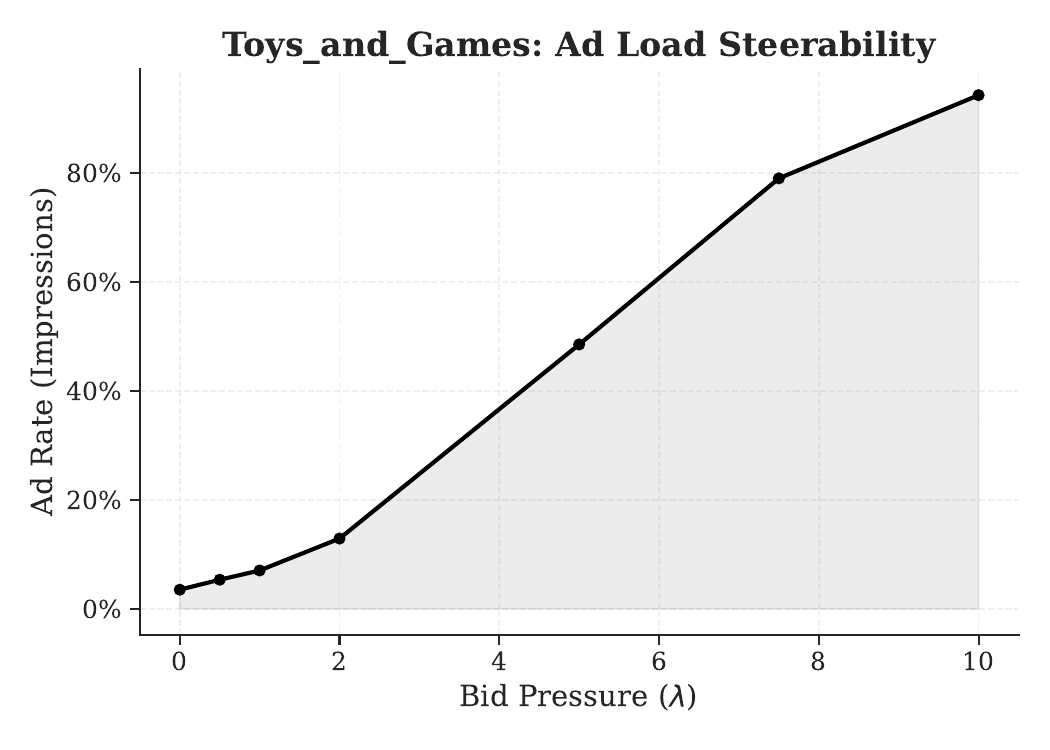}
        \caption{\textbf{Steerability} (Toys).}
    \end{subfigure}
    
    \vspace{1em}
    
    \begin{subfigure}[b]{0.44\textwidth}
        \centering
        \includegraphics[width=0.95\linewidth]{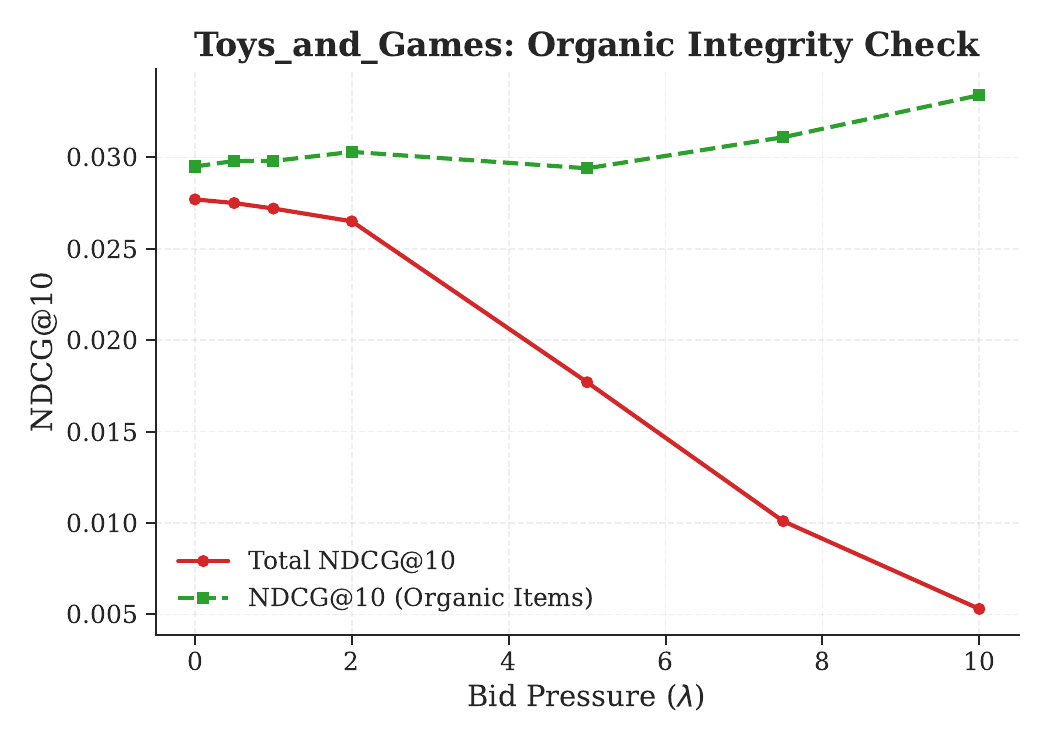}
        \caption{\textbf{Organic Integrity} (Toys).}
    \end{subfigure}
    \hfill
    \begin{subfigure}[b]{0.5\textwidth}
        \centering
        \includegraphics[width=0.95\linewidth]{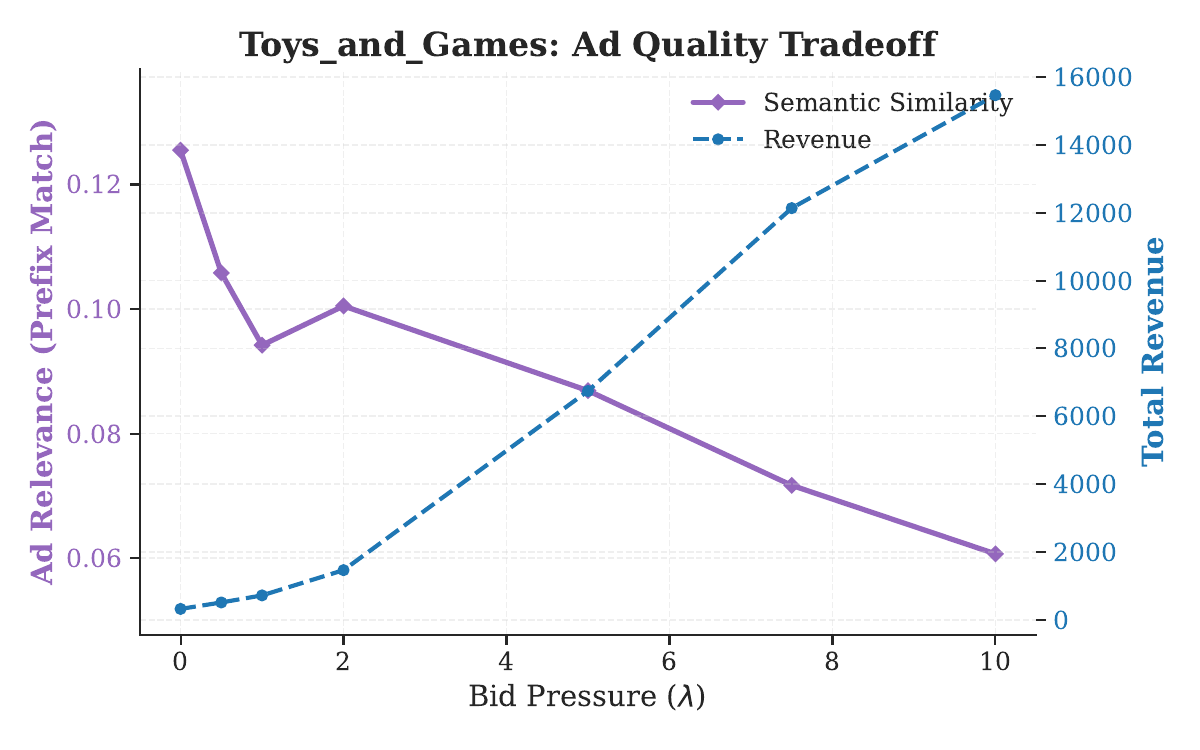}
        \caption{\textbf{Quality Trade-off} (Toys).}
    \end{subfigure}
    
    \caption{\textbf{Full Dynamics on Toys Dataset.}}
    \label{fig:toys_full}
\end{figure*}

\begin{table*}[h]
\centering
\setlength{\tabcolsep}{4pt} 
\renewcommand{\arraystretch}{1.0} 
\caption{\textbf{Full Marketplace Performance Results.} We compare GEM-Rec against the baseline (TIGER) across all datasets and $\lambda$ settings. The \textbf{Train Ad \%} column is provided for reference.}
\label{tab:appendix_full_results}
\resizebox{1.0\textwidth}{!}{
\begin{tabular}{ll c cc cc cc}
\toprule
& & \textbf{Train} & \multicolumn{2}{c}{\textbf{Economic Utility}} & \multicolumn{2}{c}{\textbf{Total Metrics}} & \multicolumn{2}{c}{\textbf{Metrics for Organic Gen.}} \\
\cmidrule(lr){4-5} \cmidrule(lr){6-7} \cmidrule(lr){8-9} 
\textbf{Dataset} & \textbf{Method} & \textbf{Ad \%} & \textbf{Ad Rate} & \textbf{Revenue} & \textbf{NDCG@10} & \textbf{Recall@10} & \textbf{O.NDCG@10} & \textbf{O.Recall@10} \\
\midrule

\multirow{8}{*}{\textbf{Beauty}} 
& TIGER (Baseline) & \multirow{8}{*}{4.23\%} & 0.0\% & 0.0 & 0.0282$^\dagger$ & 0.0529 & 0.0293 & 0.0550 \\
& GEM-Rec ($\lambda=0.0$) & & 3.1\% & 345.1 & 0.0301 & 0.0569 & 0.0318 & 0.0602 \\
& GEM-Rec ($\lambda=0.5$) & & 4.2\% & 489.3 & 0.0296 & 0.0560 & 0.0317 & 0.0603 \\
& GEM-Rec ($\lambda=1.0$) & & 6.0\% & 726.6 & 0.0295 & 0.0559 & 0.0320 & 0.0606 \\
& GEM-Rec ($\lambda=2.0$) & & 10.7\% & 1,386.8 & 0.0282 & 0.0537 & 0.0319 & 0.0610 \\
& GEM-Rec ($\lambda=5.0$) & & 44.4\% & 6,795.4 & 0.0204 & 0.0385 & 0.0329 & 0.0622 \\
& GEM-Rec ($\lambda=7.5$) & & 76.9\% & 13,068.8 & 0.0103 & 0.0201 & 0.0321 & 0.0612 \\
& GEM-Rec ($\lambda=10.0$) & & 93.4\% & 17,427.6 & 0.0048 & 0.0101 & 0.0298 & 0.0574 \\
\midrule

\multirow{8}{*}{\textbf{Sports}} 
& TIGER (Baseline) & \multirow{8}{*}{6.85\%} & 0.0\% & 0.0 & 0.0176$^\dagger$ & 0.0308 & 0.0187 & 0.0327 \\
& GEM-Rec ($\lambda=0.0$) & & 4.7\% & 804.2 & 0.0173 & 0.0335 & 0.0187 & 0.0364 \\
& GEM-Rec ($\lambda=0.5$) & & 6.3\% & 1,177.7 & 0.0170 & 0.0330 & 0.0186 & 0.0362 \\
& GEM-Rec ($\lambda=1.0$) & & 9.0\% & 1,804.2 & 0.0168 & 0.0323 & 0.0185 & 0.0359 \\
& GEM-Rec ($\lambda=2.0$) & & 15.7\% & 3,731.3 & 0.0160 & 0.0310 & 0.0187 & 0.0363 \\
& GEM-Rec ($\lambda=5.0$) & & 52.8\% & 16,662.4 & 0.0114 & 0.0223 & 0.0198 & 0.0381 \\
& GEM-Rec ($\lambda=7.5$) & & 82.3\% & 27,712.1 & 0.0057 & 0.0117 & 0.0198 & 0.0383 \\
& GEM-Rec ($\lambda=10.0$) & & 95.9\% & 32,916.6 & 0.0025 & 0.0055 & 0.0157 & 0.0330 \\
\midrule

\multirow{8}{*}{\textbf{Toys}} 
& TIGER (Baseline) & \multirow{8}{*}{4.52\%} & 0.0\% & 0.0 & 0.0278$^\dagger$ & 0.0536 & 0.0291 & 0.0561 \\
& GEM-Rec ($\lambda=0.0$) & & 3.5\% & 319.3 & 0.0277 & 0.0536 & 0.0295 & 0.0570 \\
& GEM-Rec ($\lambda=0.5$) & & 5.4\% & 509.7 & 0.0275 & 0.0530 & 0.0298 & 0.0576 \\
& GEM-Rec ($\lambda=1.0$) & & 7.1\% & 717.9 & 0.0272 & 0.0522 & 0.0298 & 0.0575 \\
& GEM-Rec ($\lambda=2.0$) & & 12.9\% & 1,462.4 & 0.0265 & 0.0502 & 0.0303 & 0.0576 \\
& GEM-Rec ($\lambda=5.0$) & & 48.6\% & 6,746.0 & 0.0177 & 0.0343 & 0.0294 & 0.0568 \\
& GEM-Rec ($\lambda=7.5$) & & 79.1\% & 12,137.0 & 0.0101 & 0.0201 & 0.0311 & 0.0600 \\
& GEM-Rec ($\lambda=10.0$) & & 94.3\% & 15,467.5 & 0.0053 & 0.0114 & 0.0334 & 0.0645 \\
\midrule

\multirow{8}{*}{\textbf{Steam}} 
& TIGER (Baseline) & \multirow{8}{*}{3.49\%} & 0.0\% & 0.0 & 0.1442$^\dagger$ & 0.1818 & 0.1487 & 0.1875 \\
& GEM-Rec ($\lambda=0.0$) & & 2.5\% & 535.3 & 0.1411 & 0.1782 & 0.1468 & 0.1857 \\
& GEM-Rec ($\lambda=0.5$) & & 3.6\% & 815.4 & 0.1401 & 0.1770 & 0.1472 & 0.1862 \\
& GEM-Rec ($\lambda=1.0$) & & 4.7\% & 1,173.6 & 0.1381 & 0.1742 & 0.1467 & 0.1853 \\
& GEM-Rec ($\lambda=2.0$) & & 8.9\% & 2,390.2 & 0.1322 & 0.1672 & 0.1464 & 0.1854 \\
& GEM-Rec ($\lambda=5.0$) & & 40.6\% & 12,224.1 & 0.0920 & 0.1173 & 0.1512 & 0.1917 \\
& GEM-Rec ($\lambda=7.5$) & & 76.3\% & 24,086.6 & 0.0419 & 0.0542 & 0.1582 & 0.1978 \\
& GEM-Rec ($\lambda=10.0$) & & 93.7\% & 30,598.5 & 0.0150 & 0.0212 & 0.1660 & 0.2070 \\
\midrule

\bottomrule
\multicolumn{9}{l}{\footnotesize $^\dagger$ Imputed Score: Baseline strictly predicts Organic, effectively scoring 0 on all Ad slots.}
\end{tabular}
}
\end{table*}
\clearpage

\subsubsection{Generative Validity}
\label{app:validity_analysis}

A critical robustness check for generative recommendation is ensuring that the model does not "hallucinate" invalid identifiers when forced to generate ads. We evaluated the validity of every generated ad sequence across all datasets and $\lambda$ settings. 

GEM-Rec achieves a 100.0\% Validity Rate for ad generation across all four datasets. This indicates that the model has learned the hierarchical structure of the Semantic IDs and output valid IDs. This observation aligns with established findings in generative retrieval, where the structured codebook facilitates robust learning of the item space. Notably, we achieve this validity purely through the model's learned weights and logit boosting, \textit{without} needing computationally expensive prefix-trie constraints during decoding.

\begin{table}[h]
\centering
\setlength{\tabcolsep}{12pt}
\renewcommand{\arraystretch}{1.1}
\caption{\textbf{Generative Validity.} The percentage of generated ad sequences that decode to a valid, existing item in the inventory.}
\label{tab:validity_simple}
\resizebox{0.3\columnwidth}{!}{
\begin{tabular}{lc}
\toprule
\textbf{Dataset} & \textbf{Valid Rate} \\
\midrule
\textbf{Beauty} & 100.0\% \\
\textbf{Sports} & 100.0\% \\
\textbf{Toys}   & 100.0\% \\
\textbf{Steam}  & 100.0\% \\
\bottomrule
\end{tabular}
}
\end{table}

\section{Volatility Experiments}
\label{app:volatility}

We provide the detailed breakdown of the ``Bid Shock" experiments across all four datasets. In each table, \textbf{High-Value Share} denotes the percentage of displayed ads that belong to the specific 5\% subset of inventory that received the $10\times$ bid multiplier.

\begin{table}[h]
\centering
\caption{\textbf{Bid Shock Response (Steam).} The baseline ($\lambda=0$) relies on historical priors and misses the new opportunity. GEM-Rec ($\lambda \ge 0.5$) dynamically pivots to capture the high-value inventory.}
\label{tab:shock_steam_app}
\resizebox{0.6\textwidth}{!}{
\begin{tabular}{l ccc}
\toprule
\textbf{Setting} & \textbf{Ad Rate} & \textbf{High-Value Share} & \textbf{Revenue Uplift} \\
\midrule
Baseline ($\lambda=0$) & 2.38\% & 21.8\% & $1.0\times$ \\
GEM-Rec ($\lambda=0.5$) & 7.01\% & 81.5\% & $9.0\times$ \\
GEM-Rec ($\lambda=1.0$) & 18.03\% & 97.4\% & $28.2\times$ \\
GEM-Rec ($\lambda=2.0$) & 62.44\% & 99.9\% & $104.6\times$ \\
\bottomrule
\end{tabular}
}
\end{table}

\begin{table}[h]
\centering
\caption{\textbf{Bid Shock Response (Toys).} At $\lambda=0$, the High-Value Share is extremely low (3.2\%), indicating the model is effectively blind to the new bids. GEM-Rec corrects this rapidly, reaching 97\% share by $\lambda=2.0$.}
\label{tab:shock_toys}
\resizebox{0.6\textwidth}{!}{
\begin{tabular}{l ccc}
\toprule
\textbf{Setting} & \textbf{Ad Rate} & \textbf{High-Value Share} & \textbf{Revenue Uplift} \\
\midrule
Baseline ($\lambda=0.0$) & 3.49\% & 3.2\% & $1.0\times$ \\
GEM-Rec ($\lambda=0.5$) & 10.31\% & 49.3\% & $14.3\times$ \\
GEM-Rec ($\lambda=1.0$) & 24.45\% & 81.7\% & $55.8\times$ \\
GEM-Rec ($\lambda=2.0$) & 68.66\% & 97.2\% & $202.4\times$ \\
\bottomrule
\end{tabular}
}
\end{table}

\begin{table}[h]
\centering
\caption{\textbf{Bid Shock Response (Beauty).} The model demonstrates consistent adaptability. The High-Value share jumps from a negligible 1.6\% to 43\% with only a minor increase in Ad Rate (from 3.1\% to 8.8\%).}
\label{tab:shock_beauty}
\resizebox{0.6\textwidth}{!}{
\begin{tabular}{l ccc}
\toprule
\textbf{Setting} & \textbf{Ad Rate} & \textbf{High-Value Share} & \textbf{Revenue Uplift} \\
\midrule
Baseline ($\lambda=0.0$) & 3.12\% & 1.6\% & $1.0\times$ \\
GEM-Rec ($\lambda=0.5$) & 8.76\% & 43.0\% & $14.7\times$ \\
GEM-Rec ($\lambda=1.0$) & 20.51\% & 77.6\% & $57.1\times$ \\
GEM-Rec ($\lambda=2.0$) & 63.65\% & 98.5\% & $240.6\times$ \\
\bottomrule
\end{tabular}
}
\end{table}

\begin{table}[h]
\centering
\caption{\textbf{Bid Shock Response (Sports).} This dataset exhibits the most aggressive adaptation. At $\lambda=1.0$, nearly every ad shown (99.2\%) is already from the high-value subset.}
\label{tab:shock_sports}
\resizebox{0.6\textwidth}{!}{
\begin{tabular}{l ccc}
\toprule
\textbf{Setting} & \textbf{Ad Rate} & \textbf{High-Value Share} & \textbf{Revenue Uplift} \\
\midrule
Baseline ($\lambda=0.0$) & 4.87\% & 4.7\% & $1.0\times$ \\
GEM-Rec ($\lambda=0.5$) & 12.56\% & 72.9\% & $14.6\times$ \\
GEM-Rec ($\lambda=1.0$) & 30.41\% & 99.2\% & $50.1\times$ \\
GEM-Rec ($\lambda=2.0$) & 74.71\% & 100.0\% & $136.2\times$ \\
\bottomrule
\end{tabular}
}
\end{table}

\section{Ablations: Higher ad rate in training set}
\label{app:ablations}

In this section, we present the complete visual analysis for the \textbf{High Ad Rate Ablation}. We include all numerical result in Table \ref{tab:appendix_full_results_strict}.

This setting simulates a marketplace where the historical interaction logs already contain a significantly higher density of ad interactions (approximately 10--20\%, compared to 3--6\% in the main experiment). These results confirm that GEM-Rec successfully learns this higher baseline frequency directly from the data (evident at $\lambda=0$) and maintains the ability to perform precise trade-offs on top of this elevated base rate.

\subsubsection{Steam Dataset (High Rate)}
\begin{figure*}[h!]
    \centering
    \begin{subfigure}[b]{0.48\textwidth}
        \centering
        \includegraphics[width=0.95\linewidth]{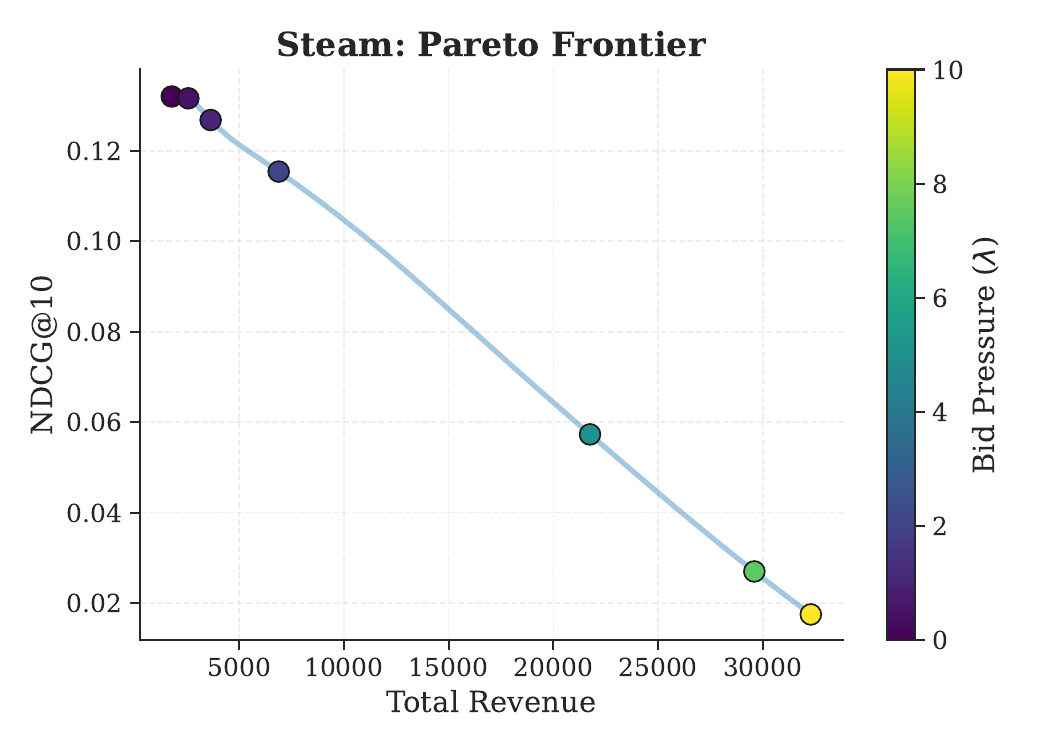}
        \caption{\textbf{Pareto Frontier} (Steam High-Rate)}
    \end{subfigure}
    \hfill
    \begin{subfigure}[b]{0.48\textwidth}
        \centering
        \includegraphics[width=0.95\linewidth]{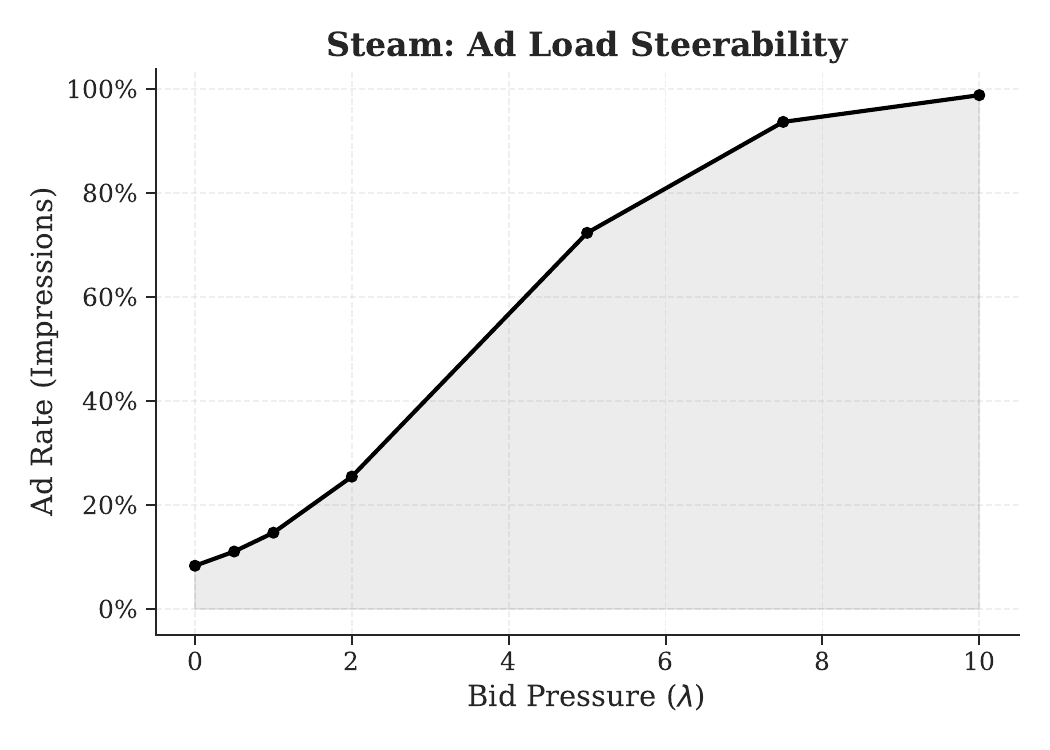}
        \caption{\textbf{Steerability} (Steam High-Rate)}
    \end{subfigure}
    
    \vspace{1em}
    
    \begin{subfigure}[b]{0.44\textwidth}
        \centering
        \includegraphics[width=0.95\linewidth]{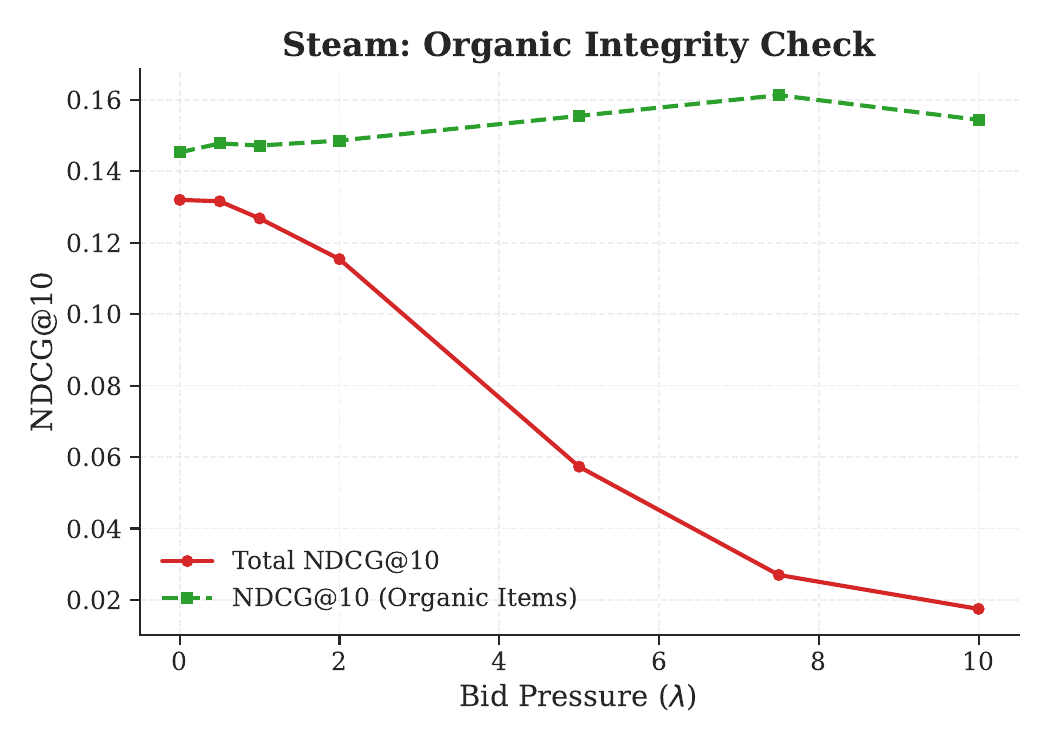}
        \caption{\textbf{Organic Integrity} (Steam High-Rate)}
    \end{subfigure}
    \hfill
    \begin{subfigure}[b]{0.5\textwidth}
        \centering
        \includegraphics[width=0.95\linewidth]{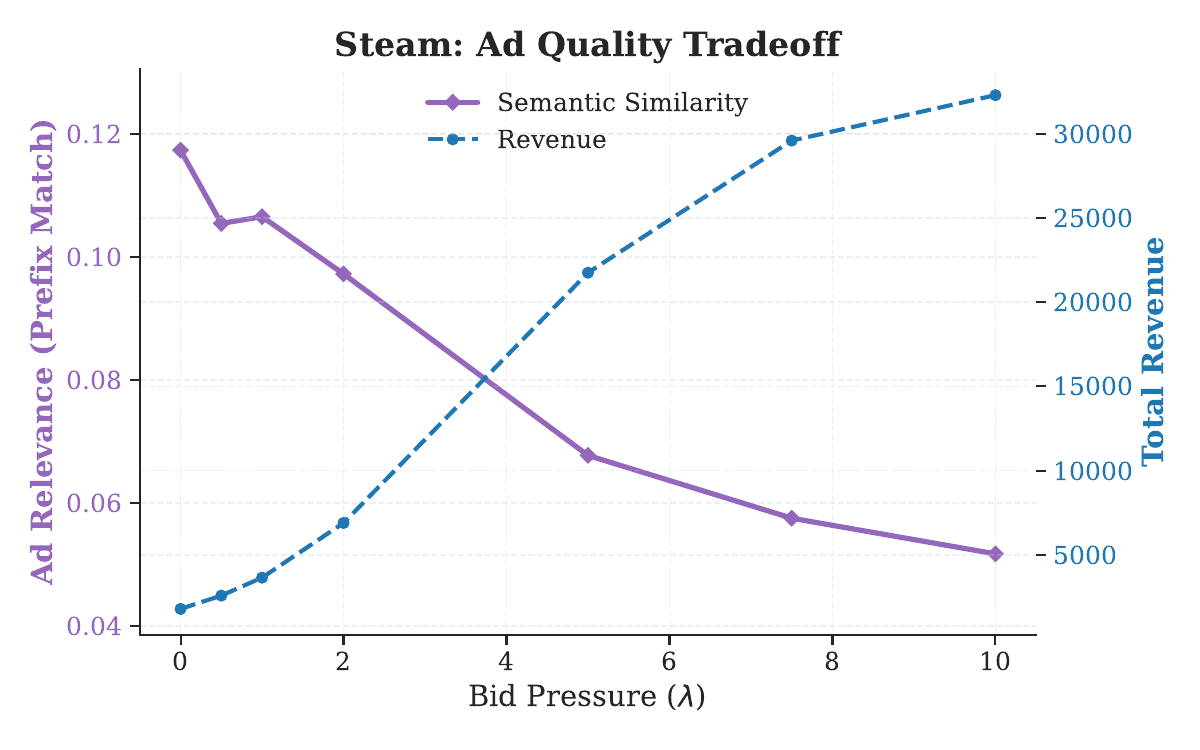}
        \caption{\textbf{Quality Trade-off} (Steam High-Rate)}
    \end{subfigure}
    
    \caption{\textbf{Dynamics on Steam (High Ad Rate).} The model correctly learns the higher base ad rate ($\sim$10\%) at $\lambda=0$ and allows strictly monotonic boosting beyond this point.}
    \label{fig:steam_high}
\end{figure*}
\clearpage

\subsubsection{Beauty Dataset (High Rate)}
\begin{figure*}[h!]
    \centering
    \begin{subfigure}[b]{0.48\textwidth}
        \centering
        \includegraphics[width=0.95\linewidth]{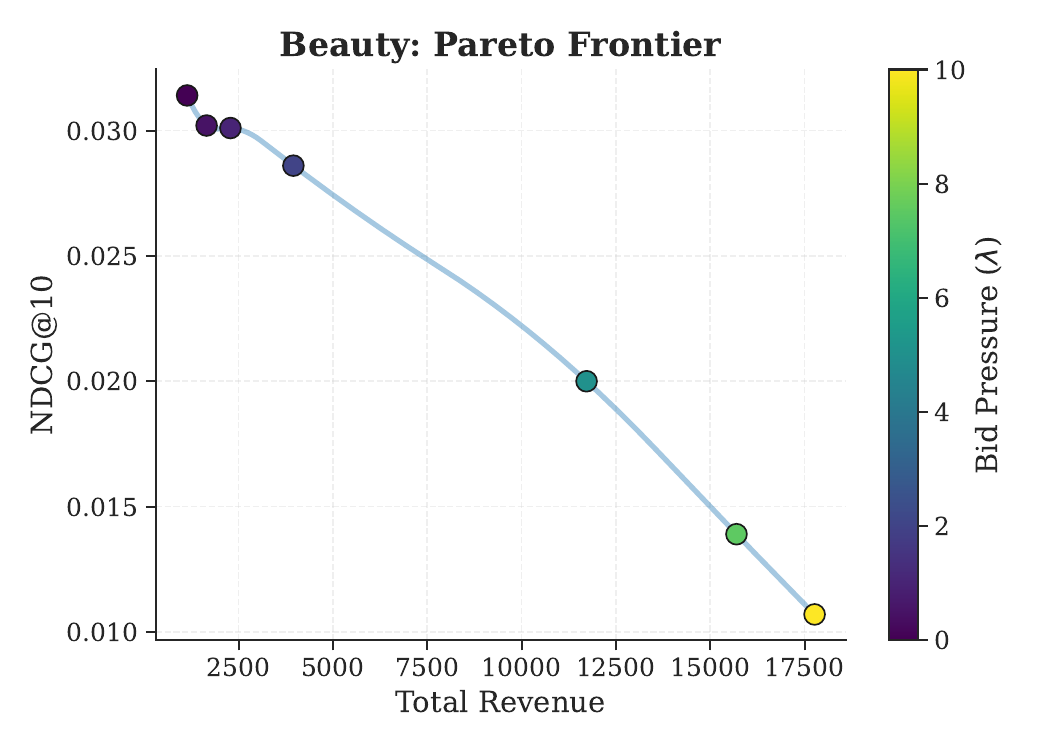}
        \caption{\textbf{Pareto Frontier} (Beauty High-Rate)}
    \end{subfigure}
    \hfill
    \begin{subfigure}[b]{0.48\textwidth}
        \centering
        \includegraphics[width=0.95\linewidth]{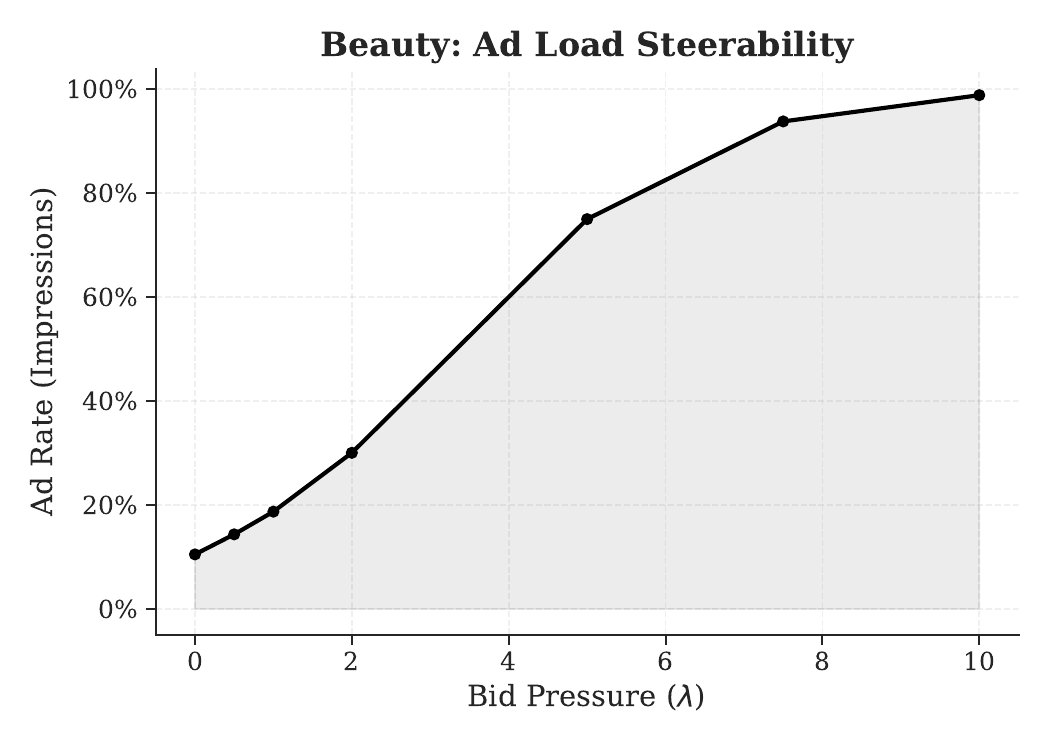}
        \caption{\textbf{Steerability} (Beauty High-Rate)}
    \end{subfigure}
    
    \vspace{1em}
    
    \begin{subfigure}[b]{0.44\textwidth}
        \centering
        \includegraphics[width=0.95\linewidth]{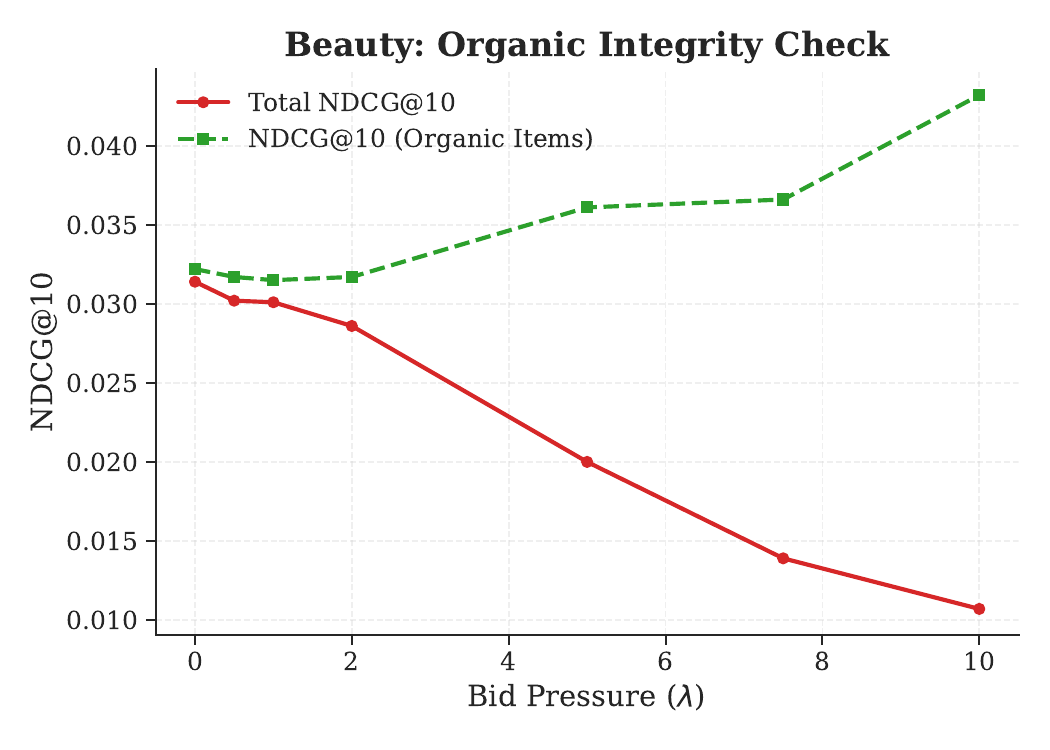}
        \caption{\textbf{Organic Integrity} (Beauty High-Rate)}
    \end{subfigure}
    \hfill
    \begin{subfigure}[b]{0.5\textwidth}
        \centering
        \includegraphics[width=0.95\linewidth]{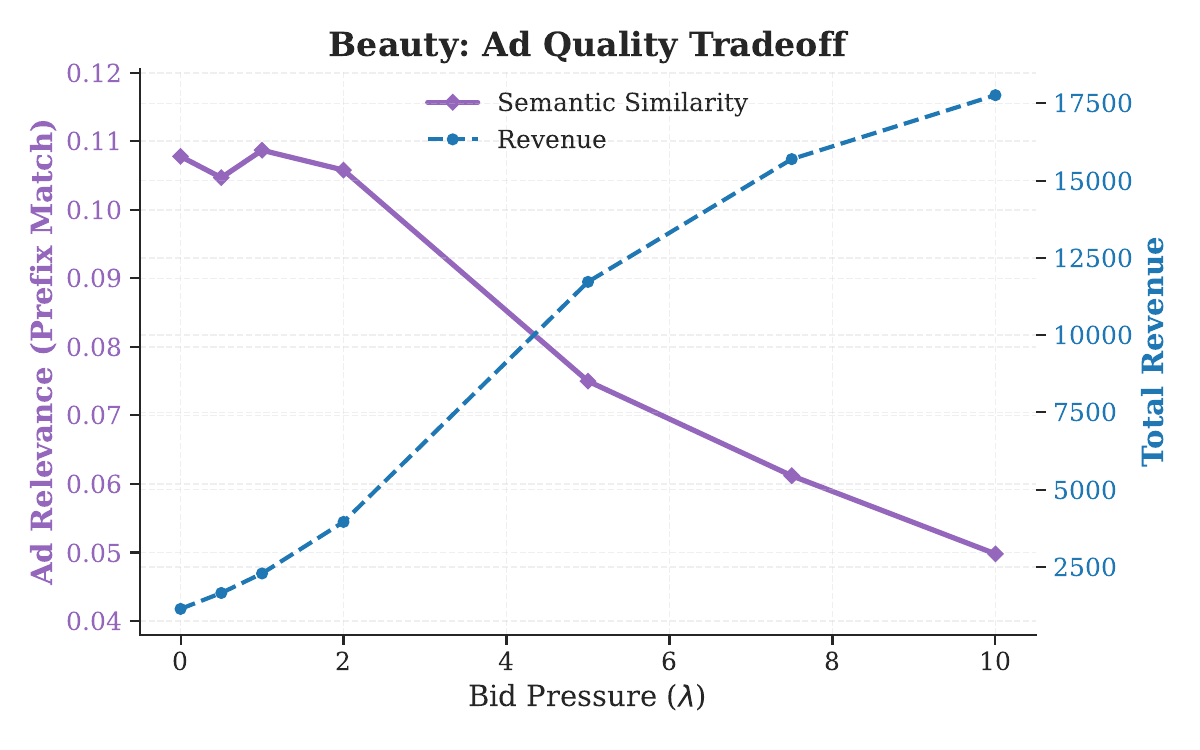}
        \caption{\textbf{Quality Trade-off} (Beauty High-Rate)}
    \end{subfigure}
    
    \caption{\textbf{Dynamics on Beauty (High Ad Rate).}}
    \label{fig:beauty_high}
\end{figure*}
\clearpage

\subsubsection{Sports Dataset (High Rate)}
\label{app:convex_eg}
\begin{figure*}[h!]
    \centering
    \begin{subfigure}[b]{0.48\textwidth}
        \centering
        \includegraphics[width=0.95\linewidth]{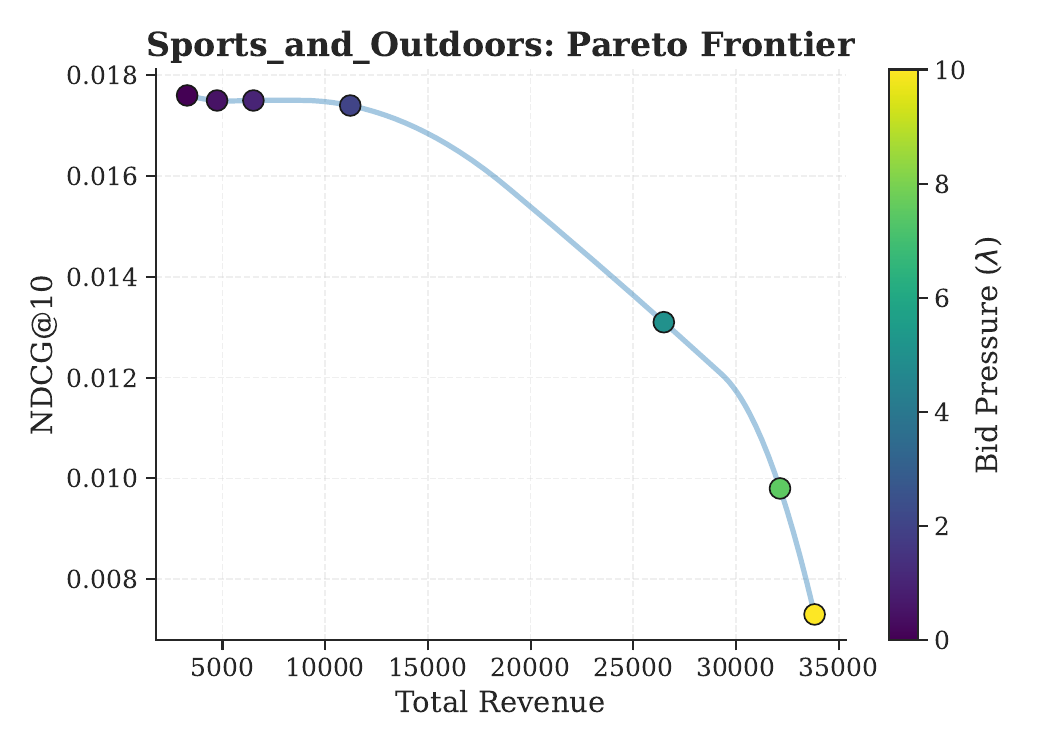}
        \caption{\textbf{Pareto Frontier} (Sports High-Rate)}
    \end{subfigure}
    \hfill
    \begin{subfigure}[b]{0.48\textwidth}
        \centering
        \includegraphics[width=0.95\linewidth]{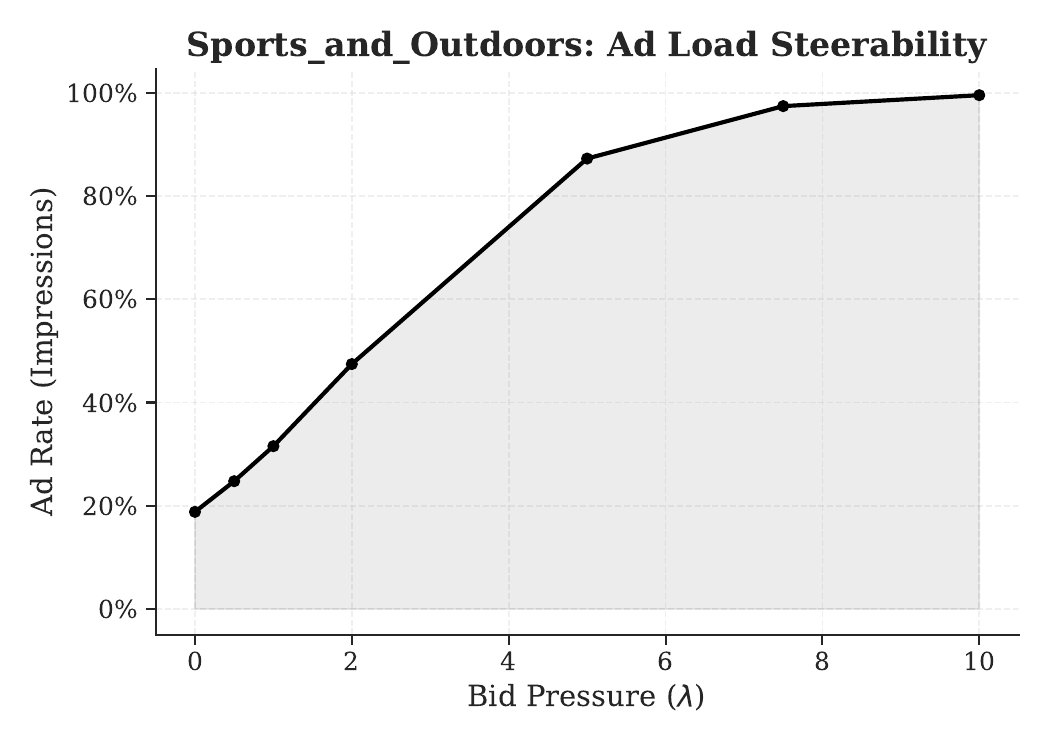}
        \caption{\textbf{Steerability} (Sports High-Rate)}
    \end{subfigure}
    
    \vspace{1em}
    
    \begin{subfigure}[b]{0.44\textwidth}
        \centering
        \includegraphics[width=0.95\linewidth]{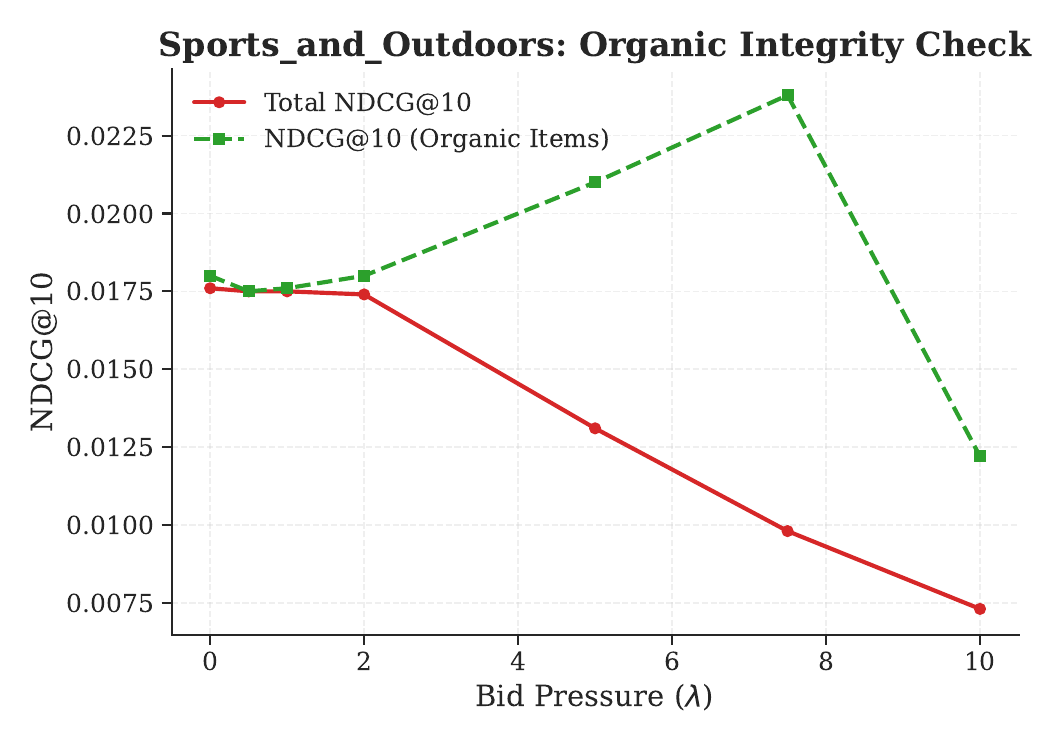}
        \caption{\textbf{Organic Integrity} (Sports High-Rate).}
    \end{subfigure}
    \hfill
    \begin{subfigure}[b]{0.5\textwidth}
        \centering
        \includegraphics[width=0.95\linewidth]{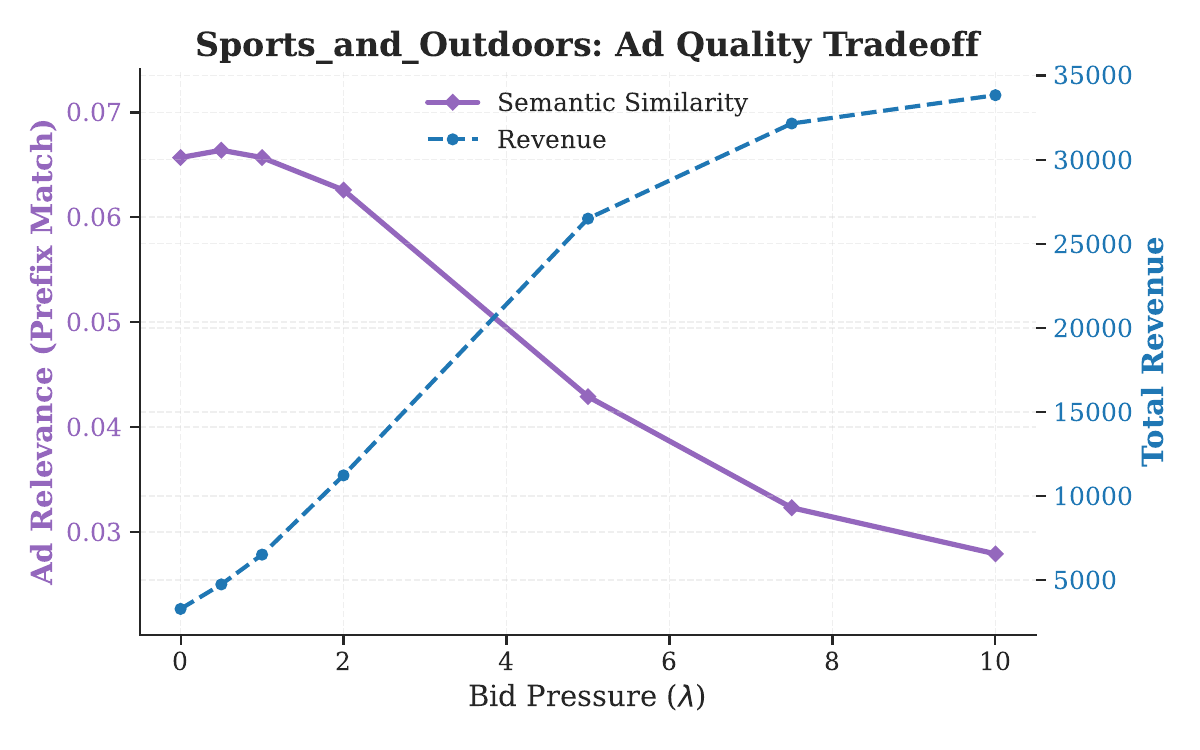}
        \caption{\textbf{Quality Trade-off} (Sports High-Rate)}
    \end{subfigure}
    
    \caption{\textbf{Dynamics on Sports (High Ad Rate). Note here that the drop of the green dotted curve at $\lambda=10$ for figure c) here corresponds with a nearly 100\% ad rate (see figure b) leaving very few organic generation, therefore it reflects a high variance and does not contradict our main conclusion and observations.}}
    \label{fig:sports_high}
\end{figure*}
\clearpage

\subsubsection{Toys Dataset (High Rate)}
\label{app:convex_eg2}
\begin{figure*}[h!]
    \centering
    \begin{subfigure}[b]{0.48\textwidth}
        \centering
        \includegraphics[width=0.95\linewidth]{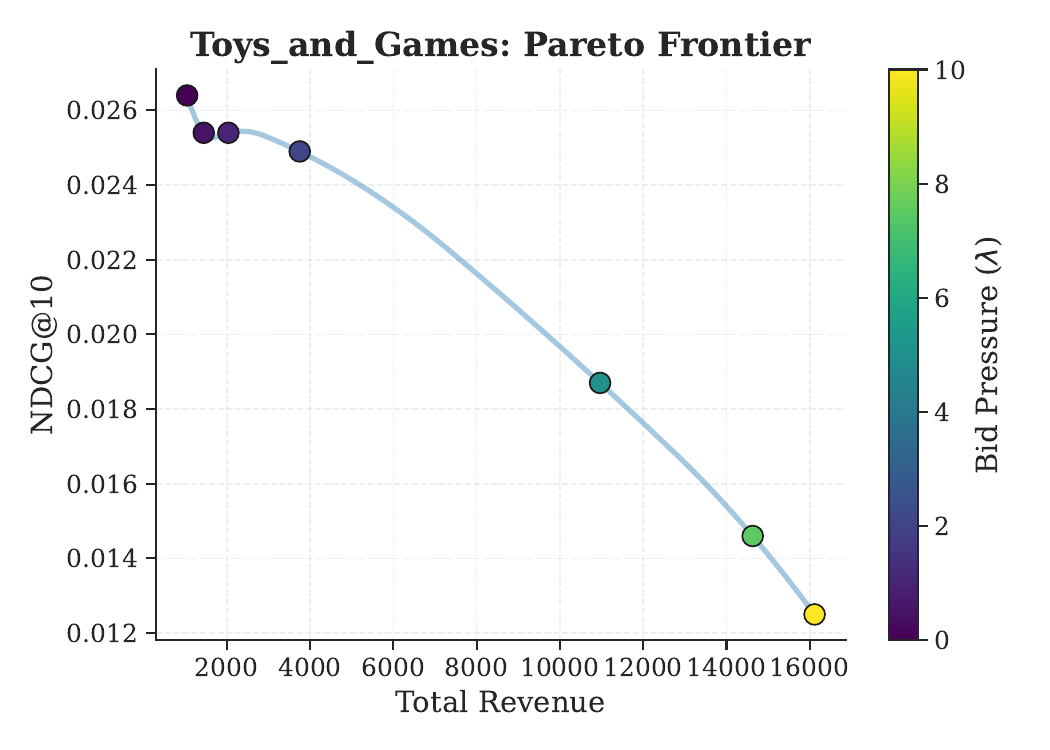}
        \caption{\textbf{Pareto Frontier} (Toys High-Rate)}
    \end{subfigure}
    \hfill
    \begin{subfigure}[b]{0.48\textwidth}
        \centering
        \includegraphics[width=0.95\linewidth]{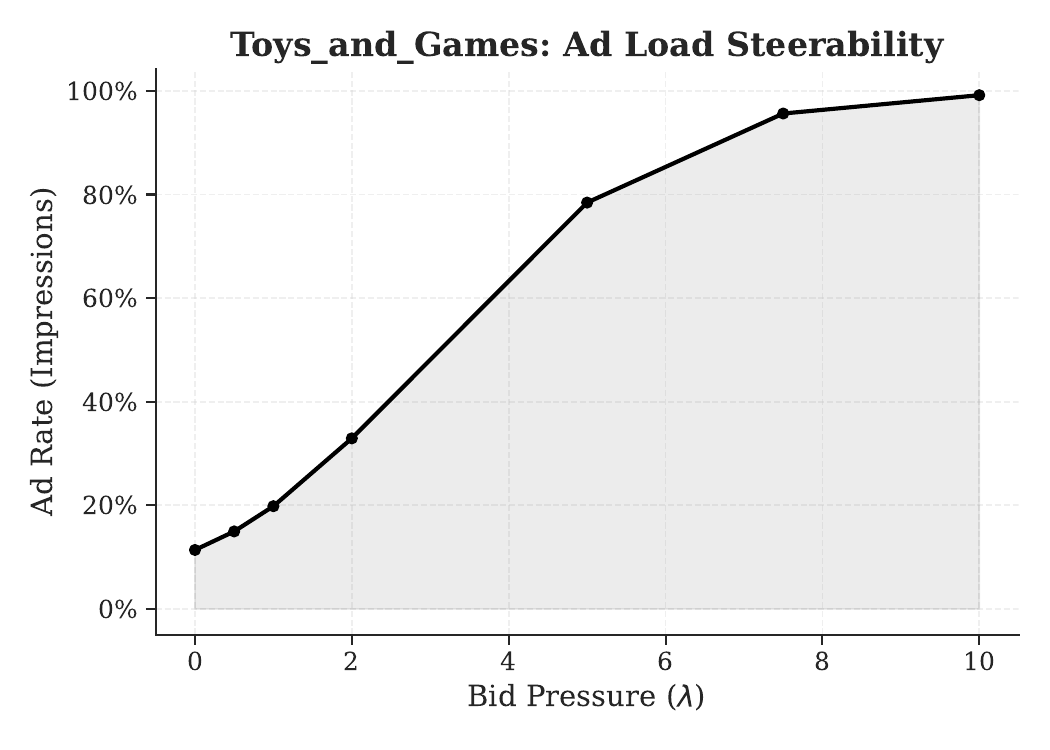}
        \caption{\textbf{Steerability} (Toys High-Rate)}
    \end{subfigure}
    
    \vspace{1em}
    
    \begin{subfigure}[b]{0.44\textwidth}
        \centering
        \includegraphics[width=0.95\linewidth]{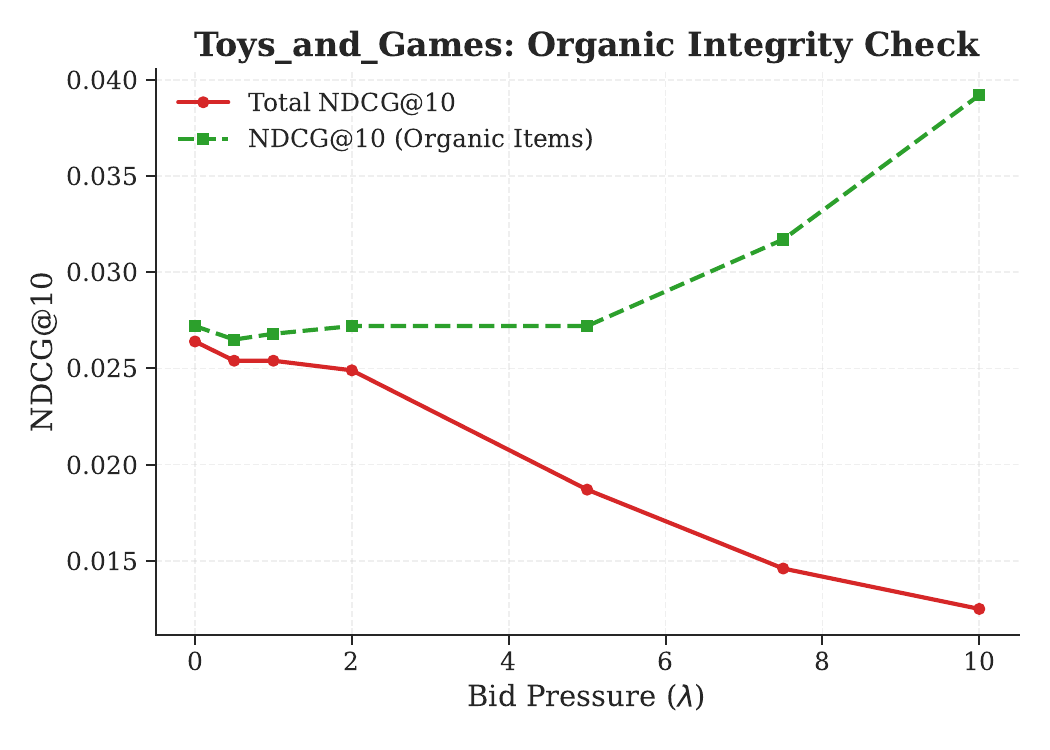}
        \caption{\textbf{Organic Integrity} (Toys High-Rate)}
    \end{subfigure}
    \hfill
    \begin{subfigure}[b]{0.5\textwidth}
        \centering
        \includegraphics[width=0.95\linewidth]{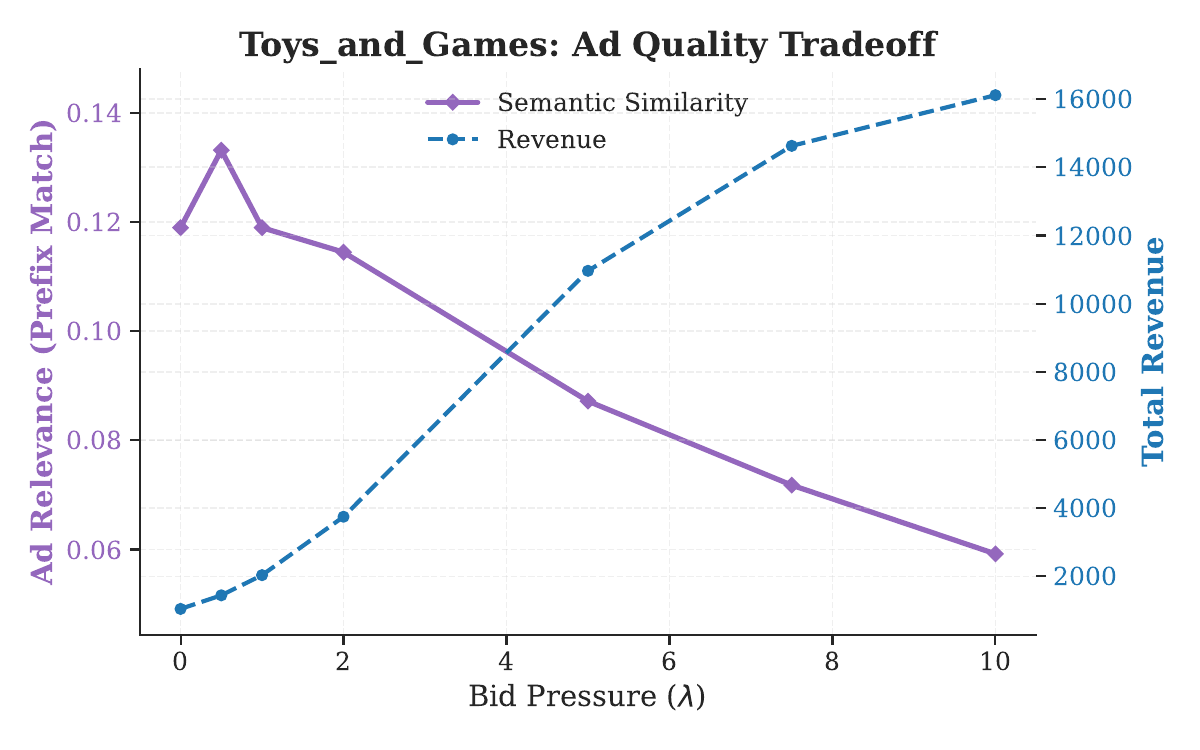}
        \caption{\textbf{Quality Trade-off} (Toys High-Rate)}
    \end{subfigure}
    
    \caption{\textbf{Dynamics on Toys (High Ad Rate).}}
    \label{fig:toys_high}
\end{figure*}
\clearpage

\begin{table*}[h]
\centering
\setlength{\tabcolsep}{4pt} 
\renewcommand{\arraystretch}{1.0} 
\caption{\textbf{Full Marketplace Performance Results (Strict Last-Item Protocol).} We compare GEM-Rec against the baseline (TIGER) across all datasets and $\lambda$ settings under the new evaluation setting.}
\label{tab:appendix_full_results_strict}
\resizebox{1.0\textwidth}{!}{
\begin{tabular}{ll c cc cc cc}
\toprule
& & \textbf{Train} & \multicolumn{2}{c}{\textbf{Economic Utility}} & \multicolumn{2}{c}{\textbf{Total Metrics}} & \multicolumn{2}{c}{\textbf{Metrics for Organic Gen.}} \\
\cmidrule(lr){4-5} \cmidrule(lr){6-7} \cmidrule(lr){8-9} 
\textbf{Dataset} & \textbf{Method} & \textbf{Ad \%} & \textbf{Ad Rate} & \textbf{Revenue} & \textbf{NDCG@10} & \textbf{Recall@10} & \textbf{O.NDCG@10} & \textbf{O.Recall@10} \\
\midrule

\multirow{8}{*}{\textbf{Beauty}} 
& TIGER (Baseline) & \multirow{8}{*}{\textbf{11.99\%}} & 0.0\% & 0.0 & 0.0245$^\dagger$ & 0.0458 & 0.0277 & 0.0518 \\
& GEM-Rec ($\lambda=0.0$) & & 10.5\% & 1,137.1 & 0.0314 & 0.0565 & 0.0322 & 0.0589 \\
& GEM-Rec ($\lambda=0.5$) & & 14.3\% & 1,653.0 & 0.0302 & 0.0546 & 0.0317 & 0.0581 \\
& GEM-Rec ($\lambda=1.0$) & & 18.7\% & 2,285.2 & 0.0301 & 0.0539 & 0.0315 & 0.0578 \\
& GEM-Rec ($\lambda=2.0$) & & 30.0\% & 3,953.5 & 0.0286 & 0.0517 & 0.0317 & 0.0587 \\
& GEM-Rec ($\lambda=5.0$) & & 75.0\% & 11,722.6 & 0.0200 & 0.0364 & 0.0361 & 0.0649 \\
& GEM-Rec ($\lambda=7.5$) & & 93.8\% & 15,695.7 & 0.0139 & 0.0271 & 0.0366 & 0.0665 \\
& GEM-Rec ($\lambda=10.0$) & & 98.9\% & 17,765.6 & 0.0107 & 0.0216 & 0.0432 & 0.0748 \\
\midrule

\multirow{8}{*}{\textbf{Sports}} 
& TIGER (Baseline) & \multirow{8}{*}{\textbf{20.73\%}} & 0.0\% & 0.0 & 0.0139$^\dagger$ & 0.0245 & 0.0174 & 0.0307 \\
& GEM-Rec ($\lambda=0.0$) & & 18.8\% & 3,279.4 & 0.0176 & 0.0330 & 0.0180 & 0.0340 \\
& GEM-Rec ($\lambda=0.5$) & & 24.8\% & 4,736.2 & 0.0175 & 0.0326 & 0.0175 & 0.0332 \\
& GEM-Rec ($\lambda=1.0$) & & 31.5\% & 6,506.6 & 0.0175 & 0.0330 & 0.0176 & 0.0334 \\
& GEM-Rec ($\lambda=2.0$) & & 47.4\% & 11,221.5 & 0.0174 & 0.0331 & 0.0180 & 0.0342 \\
& GEM-Rec ($\lambda=5.0$) & & 87.3\% & 26,486.6 & 0.0131 & 0.0280 & 0.0210 & 0.0389 \\
& GEM-Rec ($\lambda=7.5$) & & 97.4\% & 32,143.2 & 0.0098 & 0.0223 & 0.0238 & 0.0439 \\
& GEM-Rec ($\lambda=10.0$) & & 99.6\% & 33,827.0 & 0.0073 & 0.0172 & 0.0122 & 0.0190 \\
\midrule

\multirow{8}{*}{\textbf{Toys}} 
& TIGER (Baseline) & \multirow{8}{*}{\textbf{12.79\%}} & 0.0\% & 0.0 & 0.0235$^\dagger$ & 0.0451 & 0.0272 & 0.0520 \\
& GEM-Rec ($\lambda=0.0$) & & 11.4\% & 1,046.9 & 0.0264 & 0.0495 & 0.0272 & 0.0516 \\
& GEM-Rec ($\lambda=0.5$) & & 15.0\% & 1,446.0 & 0.0254 & 0.0468 & 0.0265 & 0.0500 \\
& GEM-Rec ($\lambda=1.0$) & & 19.8\% & 2,037.8 & 0.0254 & 0.0473 & 0.0268 & 0.0509 \\
& GEM-Rec ($\lambda=2.0$) & & 32.9\% & 3,751.0 & 0.0249 & 0.0447 & 0.0272 & 0.0503 \\
& GEM-Rec ($\lambda=5.0$) & & 78.5\% & 10,964.2 & 0.0187 & 0.0347 & 0.0272 & 0.0497 \\
& GEM-Rec ($\lambda=7.5$) & & 95.6\% & 14,630.0 & 0.0146 & 0.0294 & 0.0317 & 0.0626 \\
& GEM-Rec ($\lambda=10.0$) & & 99.2\% & 16,116.1 & 0.0125 & 0.0264 & 0.0392 & 0.0818 \\
\midrule
\multirow{8}{*}{\textbf{Steam}} 
& TIGER (Baseline) & \multirow{8}{*}{\textbf{10.13\%}} & 0.0\% & 0.0 & 0.1376$^\dagger$ & 0.1722 & 0.1515 & 0.1896 \\
& GEM-Rec ($\lambda=0.0$) & & 8.3\% & 1,782.7 & 0.1320 & 0.1661 & 0.1453 & 0.1818 \\
& GEM-Rec ($\lambda=0.5$) & & 11.0\% & 2,573.5 & 0.1316 & 0.1657 & 0.1478 & 0.1848 \\
& GEM-Rec ($\lambda=1.0$) & & 14.7\% & 3,638.6 & 0.1268 & 0.1597 & 0.1472 & 0.1841 \\
& GEM-Rec ($\lambda=2.0$) & & 25.4\% & 6,887.0 & 0.1154 & 0.1458 & 0.1486 & 0.1856 \\
& GEM-Rec ($\lambda=5.0$) & & 72.3\% & 21,748.5 & 0.0573 & 0.0768 & 0.1555 & 0.1903 \\
& GEM-Rec ($\lambda=7.5$) & & 93.7\% & 29,593.4 & 0.0270 & 0.0422 & 0.1613 & 0.1987 \\
& GEM-Rec ($\lambda=10.0$) & & 98.8\% & 32,289.6 & 0.0175 & 0.0307 & 0.1544 & 0.1860 \\

\bottomrule
\multicolumn{9}{l}{\footnotesize $^\dagger$ Imputed Score: Baseline strictly predicts Organic, effectively scoring 0 on all Ad slots.}
\end{tabular}
}
\end{table*}

\end{document}